\newcommand{\text@hyphens}{\mathcode`\-=`\-\relax}
\newcommand{\id}[1]{\ensuremath{\mathit{\text@hyphens#1}}}
\begin{document}

\mainmatter  
\title{On Black-Box Transformations in Downward-Closed Environments}
\titlerunning{On Black-Box Transformations in Downward-Closed Environments}
\author{Warut Suksompong%
}
\authorrunning{W. Suksompong}
\institute{Department of Computer Science, Stanford University\\
353 Serra Mall, Stanford, CA 94305, USA\\
\email{warut@cs.stanford.edu}\\
}
\maketitle

\begin{abstract}
Black-box transformations have been extensively studied in algorithmic mechanism design as a generic tool for converting algorithms into truthful mechanisms without degrading the approximation guarantees. While such transformations have been designed for a variety of settings, Chawla et al. showed that no fully general black-box transformation exists for single-parameter environments. In this paper, we investigate the potentials and limits of black-box transformations in the prior-free (i.e., non-Bayesian) setting in \emph{downward-closed} single-parameter environments, a large and important class of environments in mechanism design. On the positive side, we show that such a transformation can preserve a constant fraction of the welfare at every input if the private valuations of the agents take on a constant number of values that are far apart, while on the negative side, we show that this task is not possible for general private valuations.
\end{abstract}

\section{Introduction}

Mechanism design is a science of rule-making. Its goal is to design rules so that individual strategic behavior of the agents leads to desirable global outcomes. Algorithmic mechanism design, one of the initial and most well-studied branches of algorithmic game theory, studies the tradeoff between optimizing the global outcome, respecting the incentive constraints for individual agents, and maintaining the computational tractability of the mechanism \cite{NisanRo01}. A major line of work in algorithmic mechanism design involves taking a setting where the optimization problem is computationally intractable, and designing computationally tractable mechanisms that yield a good global outcome and such that the agents have a truth-telling incentive. Ideally, the mechanisms would match the best-known approximation guarantees for computationally tractable optimization algorithms in that setting. In other words, we want to obtain truthfulness from agents in as many settings as possible without having to pay for more computation.

In the past two decades, this goal of algorithmic mechanism design has been met in a wide range of prior-free as well as Bayesian settings. For instance, Briest et al. \cite{BriestKrVo11} showed how to transform pseudopolynomial algorithms for several problems, including knapsack, constrained shortest path, and scheduling, into monotone fully polynomial time approximation schemes (FPTAS), which lead to efficient and truthful auctions for these problems. Lavi and Swamy \cite{LaviSw11} constructed a general reduction technique via linear programming that applies to a wide range of problems. The widespread success of designing computationally tractable mechanisms with optimal approximation guarantees has raised the question of whether there exists a generic method for transforming any computationally tractable algorithm into a computationally tractable mechanism without degrading the approximation guarantee. Such a method would not be allowed access to the description of the algorithm but instead would only be able to query the algorithm at specific inputs, and is therefore known as a ``black-box transformation''.

An important work that demonstrates a limit of the powers of black-box transformations was done by Chawla et al. \cite{ChawlaImLu12}, who showed among other things that no fully general black-box transformation exists for single-parameter environments in the prior-free setting. In particular, for any transformation, there exists an algorithm (along with a feasibility set) such that the transformation degrades the approximation ratio of the algorithm by at least a polynomial factor. The result holds even when the private valuations can take on only two values; Chawla et al. provided a construction with two private valuations $l<h$ satisfying $h/l=n^{7/10}$, where $n$ is the number of agents. Pass and Seth \cite{PassSe14} extended this result by allowing the transformation access to the feasibility set while assuming the existence of cryptographic one-way functions. 

Even though no fully general black-box transformation exists for single-parameter environments, it is still conceivable that there are transformations that work for certain large subclasses of such environments. One important subclass, which is the main subject of our paper, is that of \emph{downward-closed} environments, i.e., environments in which any subset of a feasible allocation is also feasible. The construction used by Chawla et al. \cite{ChawlaImLu12}, later built upon by Pass and Seth \cite{PassSe14}, relies heavily on the non-downward-closedness of the feasibility set. The construction only includes three feasible allocations, and it is crucial that the transformation cannot arbitrarily ``round down'' the allocations as it would be able to if the feasibility set were downward-closed. Since downward-closed environments occur in a wide variety of settings in mechanism design, including knapsack auctions and combinatorial auctions, we find the question that we study to be a natural and important one. We consider such settings and assume, crucially, that the black-box transformation is aware that the feasible set is downward-closed. As a result, when the transformation makes a query to the algorithm, it can potentially learn many more feasible allocations than merely the one it obtains. In this paper, we investigate the potentials and limits of black-box transformations when they are endowed with this extra power.

\subsection{Our results}

Throughout the paper, we consider the prior-free (i.e., non-Bayesian) setting. In Section \ref{sec:negativeresults}, we show the limits of black-box transformations in downward-closed environments. We prove that such transformations cannot preserve the full welfare at every input, even when the private valuations can take on only two arbitrary values (Theorem \ref{thm:100pointwise}). Preserving a constant fraction of the welfare pointwise is impossible if the ratio between the two values $l<h$ is sublinear, i.e., $h/l\in O(n^\alpha)$ for $\alpha\in[0,1)$, where $n$ is the number of agents (Theorems \ref{thm:constratioconstpointwise} and \ref{thm:nonconstratioconstpointwise}), while preserving the approximation ratio is also impossible if the values are within a constant factor of each other and the transformation is restricted to querying inputs of Hamming distance $o(n)$ away from its input (Theorem \ref{thm:worstcasepointwisehamming}). 

In Section \ref{sec:positiveresults}, we show the powers of black-box transformations in downward-closed environments. We prove that when the private valuations can take on only a constant number of values, each pair of values separated by a ratio of $\Omega(n)$, it becomes possible for a transformation to preserve a constant fraction of the welfare pointwise, and therefore the approximation ratio as well (Theorem \ref{thm:algotwo}). The same is also true if the private valuations are all within a constant factor of each other (Theorem \ref{thm:constalloc}). Combined with the negative results, this gives us a complete picture of constant-fraction welfare-preserving transformations for multiple input values. Not only are these results interesting in their own right, but they also demonstrate the borders of the negative results that we can hope to prove.

The results are summarized in Table \ref{table:summary} for the case where the private valuations can take on two values, but they can be generalized to any constant-size range of private valuations as well.

    \begin{table*}
\begin{center}
    \begin{tabular}{ | p{2.4cm} | p{2cm} | p{2cm} | p{3.3cm} | p{2cm} | }
    \hline
     & $100\%$ pointwise & Constant fraction pointwise & $100\%$ approx ratio & Constant fraction approx ratio \\ \hline

    $h/l=n^{7/10}$; $\mathcal{F}$ unknown, not downward-closed & No \cite{ChawlaImLu12} & No \cite{ChawlaImLu12} & No \cite{ChawlaImLu12} & No \cite{ChawlaImLu12} \\ \hline

    $h/l\in\Omega(n)$; $\mathcal{F}$ known, downward-closed & No  (Theorem \ref{thm:100pointwise}) & Yes  (Theorem \ref{thm:algotwo}) & ? & Yes  (Theorem \ref{thm:algotwo}) \\ \hline

    $h/l\in\Theta(1)$; $\mathcal{F}$ known, downward-closed & No  (Theorem \ref{thm:100pointwise}) & No  (Theorem \ref{thm:constratioconstpointwise}) & No if restricted to Hamming distance $f(n)\in o(n)$, $\mathcal{F}$ unknown  (Theorem \ref{thm:worstcasepointwisehamming}) & Yes  (Theorem \ref{thm:constalloc}) \\ \hline

    $h/l\in O(n^\alpha)$ for $\alpha\in(0,1)$; $\mathcal{F}$ known, downward-closed & No  (Theorem \ref{thm:100pointwise}) & No (Theorem \ref{thm:nonconstratioconstpointwise}) & ? & ? \\ \hline
    \end{tabular}
    \caption{Summary of our results for the case where the private valuations take on two values $l<h$. The results can be generalized to any constant-size range of private valuations.}
    \label{table:summary}
\end{center}
    \end{table*}

\subsection{Related work}

Besides the works already mentioned, black-box transformations have been obtained in a variety of other prior-free and Bayesian settings. In the prior-free setting, Goel et al. \cite{GoelKaWa10} presented a reduction for symmetric single-parameter problems with a logarithmic loss in approximation, and later Huang et al. \cite{HuangWaZh11} improved the reduction to obtain arbitrarily small loss. Dughmi and Roughgarden \cite{DughmiRo14} designed a reduction for the class of multi-parameter problems that admit an FPTAS and can be encoded as a packing problem, while Babaioff et al. \cite{BabaioffLaPa09} considered reductions for single-valued combinatorial auction problems.  
Reductions that preserve the approximation guarantees have also been obtained in the single-parameter Bayesian setting by Hartline and Lucier \cite{HartlineLu10}, and their work was later extended to multi-parameter settings by Bei and Huang \cite{BeiHu11}, Cai et al. \cite{CaiDaWe13}, and Hartline et al. \cite{HartlineKlMa15}. 

\section{Preliminaries}

We will be concerned with single-parameter environments. Such an environment consists of some number $n$ of agents. Each agent $i$ has a private valuation $v_i\in\mathbb{R}$, its value ``per unit of stuff'' that it gets. In addition, there is a feasibility set $\mathcal{F}$, which specifies the allocations that can be made to the agents. Each element of $\mathcal{F}$ is a vector $(x_i)_{i=1}^n$, where $x_i\in\mathbb{R}$ denotes the ``amount of stuff'' given to agent $i$. For instance, in single-item auctions, $\mathcal{F}$ consists of the vectors with $x_i\in\{0,1\}$ and $\sum_{i=1}^nx_i=1$. A more general and well-studied type of auctions is called knapsack auctions, in which each agent is endowed with a public size $w_i$ along with its private valuation $v_i$, and the seller has some public capacity $W$. The feasibility set of a knapsack auction consists of the vectors with $x_i\in\{0,1\}$ and $\sum_{i=1}^nw_ix_i\leq W$. In this paper, we will assume that the feasibility set $\mathcal{F}$ is \emph{downward-closed}, which means that if we take an allocation and decrease the amount of stuff given to one of the agents, then the resulting allocation is also feasible. Downward-closedness is an assumption that holds in many natural settings, including the aforementioned auctions.

\paragraph{Algorithms} An \emph{algorithm} (or \emph{allocation rule}) $\mathcal{A}$ is a function that takes as input a valuation vector $\textbf{v}=(v_i)_{i=1}^n$ and outputs an allocation $\textbf{x}=(x_i)_{i=1}^n$. We will consider the social welfare objective---the \emph{welfare} of $\mathcal{A}$ at $\textbf{v}$ is given by $\textbf{v}\cdot\textbf{x}=v_1x_1+\dots+v_nx_n$, where $\textbf{x}\in\mathcal{F}$ is the allocation that $\mathcal{A}$ returns at $\textbf{v}$. We denote by $OPT_\mathcal{F}(\textbf{v})$ the maximum welfare at valuation vector $\textbf{v}$ over all allocations in $\mathcal{F}$. The \emph{(worst-case) approximation ratio} of $\mathcal{A}$ is given by $approx_\mathcal{F}(\mathcal{A})=\min_{\textbf{v}}\frac{\mathcal{A}(\textbf{v})}{OPT_\mathcal{F}(\textbf{v})}$, where we slightly abuse notation and use $\mathcal{A}(\textbf{v})$ to denote the the allocation returned by $\mathcal{A}$ at $\textbf{v}$ as well as the welfare of that allocation at $\textbf{v}$. Note that by definition, $approx_\mathcal{F}(\mathcal{A})\leq 1$ for all $\mathcal{F}$ and $\mathcal{A}$.

\paragraph{Transformations} A \emph{transformation} $\mathcal{T}$ is an algorithm that has black-box access to some other algorithm $\mathcal{A}$, i.e., it can make queries to $\mathcal{A}$. In each query, $\mathcal{T}$ specifies a valuation vector $\textbf{v}$ and obtains the allocation that $\mathcal{A}$ returns at $\textbf{v}$. We write $\mathcal{T}(\mathcal{A})$ for a transformation $\mathcal{T}$ with access to the algorithm $\mathcal{A}$. Importantly, we assume that $\mathcal{T}$ has the knowledge that the feasibility set $\mathcal{F}$ is downward-closed. For the strongest possible negative results, we assume whenever possible that (i) $\mathcal{T}$ has knowledge of $\mathcal{F}$, i.e., it can make a polynomial number of queries to ask whether a particular allocation belongs to $\mathcal{F}$, and (ii) $\mathcal{T}$ is adaptive, i.e., it can adjust its next query based on the responses it received for previous queries. For strongest positive results, our transformation $\mathcal{T}$ does not make queries about $\mathcal{F}$ and is also not adaptive. We will be clear about our assumptions on $\mathcal{T}$ for each result.

\paragraph{Mechanisms} A \emph{mechanism} is a procedure that consists of eliciting declared private valuations $(b_i)_{i=1}^n$ from the agents, and then applying an allocation rule and a payment rule on the elicited valuations. The allocation rule determines the allocation $(x_i)_{i=1}^n$ and the payment rule determines the prices $(p_i)_{i=1}^n$ to charge the agents. We are interested in transformations that, when coupled with any algorithm, lead to \emph{truthful} mechanisms, meaning that it is always in the best interest for each agent $i$ to declare the true valuation $v_i$ to the mechanism, no matter what the other agents do. A seminal result by Myerson \cite{Myerson81} states that an allocation rule can be supplemented with a payment rule to yield a truthful mechanism exactly when the allocation rule is \emph{monotone}. Monotonicity of an allocation rule means that if an agent increases its declared valuation while the declared valuations of the remaining agents stay fixed, then the agent is allocated at least as much stuff as before by the allocation rule. Therefore, the transformations that yield truthful mechanisms are exactly the ones that constitute a monotone allocation rule for any algorithm. 

\paragraph{Properties of transformations} We call a transformation $\mathcal{T}$ \emph{monotone} if $\mathcal{T}(\mathcal{A})$ is a monotone allocation rule for any algorithm $\mathcal{A}$. Furthermore, $\mathcal{T}$ is called \emph{welfare-preserving} if $\mathcal{T}(\mathcal{A})$ preserves the welfare of $\mathcal{A}$ at every input for any algorithm $\mathcal{A}$, and \emph{constant-fraction welfare-preserving} if $\mathcal{T}(\mathcal{A})$ preserves a constant fraction of the welfare of $\mathcal{A}$ at every input for any algorithm $\mathcal{A}$. Similarly, $\mathcal{T}$ is \emph{approximation-ratio-preserving} if $\mathcal{T}(\mathcal{A})$ preserves the approximation ratio of $\mathcal{A}$ for any algorithm $\mathcal{A}$, and \emph{constant-fraction approximation-ratio-preserving} if $\mathcal{T}(\mathcal{A})$ preserves a constant fraction of the approximation ratio of $\mathcal{A}$ for any algorithm $\mathcal{A}$. Note that a (constant-fraction) welfare-preserving transformation is also (constant-fraction) approximation-ratio-preserving.

\section{Negative Results}
\label{sec:negativeresults}

In this section, we consider the limits of black-box transformation in downward-closed environments. First, we show that no monotone black-box transformation preserves, up to a constant factor, the welfare of any original algorithm $A$ pointwise. We then show that if a monotone black-box transformation preserves the approximation ratio of any given input algorithm $A$, then on some input $\textbf{v}$ it must query $A$ on an input that has Hamming distance $\Omega(n)$ from $\textbf{v}$.

\subsection{Welfare-preserving transformations} 

We begin by considering the highest possible benchmark for the transformation: preserving the full welfare of any algorithm at every input. Our first theorem shows that this benchmark is impossible to fulfill even when the private valuations can take on only two arbitrary values.

\begin{theorem} 
\label{thm:100pointwise}
Let $l<h$ be arbitrary values (possibly depending on $n$). There does not exist a polynomial-time, monotone, welfare-preserving transformation, even when the transformation is allowed to be randomized and adaptive and make a polynomial number of queries to $\mathcal{F}$.
\end{theorem}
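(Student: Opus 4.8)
The plan is to construct, for any candidate transformation $\mathcal{T}$, a specific algorithm $\mathcal{A}$ together with a downward-closed feasibility set $\mathcal{F}$ on which $\mathcal{T}(\mathcal{A})$ is forced to violate either monotonicity or welfare-preservation somewhere. The key tension to exploit is the following: monotonicity forces a certain ``consistency'' across inputs that differ in one coordinate, while welfare-preservation at a single input forces $\mathcal{T}(\mathcal{A})$ to reproduce $\mathcal{A}$'s allocation value there. Because the valuations only range over $\{l,h\}$, the input space has at most $2^n$ relevant points, and I want to design $\mathcal{A}$ so that the ``welfare-optimal'' response is pinned down at each input but these pinned-down responses are mutually incompatible with monotonicity.

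\textbf{Setup.} I would let $\mathcal{F}$ be the downward-closed set generated by a small family of ``large'' allocations — say the indicator vectors of a carefully chosen set system — so that $\mathcal{F}$ contains all subsets of these generators. The point of downward-closedness is that $\mathcal{T}$, knowing $\mathcal{F}$, can round any queried allocation down, so the only thing $\mathcal{T}$ really learns from a query at $\textbf{v}$ is the single allocation $\mathcal{A}(\textbf{v})$ and (via its $\mathcal{F}$-queries) the structure of $\mathcal{F}$; crucially it does not learn $\mathcal{A}$'s behavior at un-queried inputs. I would then fix a ``hard'' input $\textbf{v}^\star$ (e.g. all agents at value $h$) and arrange that $\mathcal{A}$ returns at $\textbf{v}^\star$ an allocation $\textbf{x}^\star$ whose $h$-welfare is strictly larger than that of every other allocation in $\mathcal{F}$, so welfare-preservation forces $\mathcal{T}(\mathcal{A})(\textbf{v}^\star)=\textbf{x}^\star$ (up to reshuffling of equal-welfare allocations, which I will rule out by making $\textbf{x}^\star$ the unique maximizer). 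At the same time, on inputs where some agents drop from $h$ to $l$, I would design $\mathcal{A}$ to return allocations that favor the $l$-agents; welfare-preservation at those inputs then forces $\mathcal{T}(\mathcal{A})$ to hand stuff to the low-value agents, which — when we raise those agents back up to $h$ — contradicts the monotonicity requirement that raising an agent's bid cannot decrease its allocation.

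\textbf{Handling adaptivity, randomness, and $\mathcal{F}$-queries.} Since $\mathcal{T}$ is adaptive and may query $\mathcal{F}$, I cannot fix $\mathcal{A}$ in advance; instead I would use an adversary argument, answering $\mathcal{T}$'s queries lazily and consistently. For $\mathcal{F}$-queries the adversary commits to $\mathcal{F}$ up front (it is a fixed, simply-describable downward-closed set). For $\mathcal{A}$-queries, the adversary maintains a partial function and extends it, always preserving the invariant that (a) at the already-revealed welfare-forced inputs the partial $\mathcal{A}$ is consistent with what $\mathcal{T}(\mathcal{A})$ has output, and (b) there remain two inputs $\textbf{v},\textbf{v}'$ differing in one coordinate, not yet ``used up,'' on which the adversary can still define $\mathcal{A}$ to create a monotonicity violation. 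Because $\mathcal{T}$ runs in polynomial time it makes only polynomially many queries, so if $\mathcal{F}$'s generating family is large enough (superpolynomial in the relevant sense, or just chosen so that enough ``fresh'' coordinates survive), the adversary always has room to complete the trap. For randomized $\mathcal{T}$, I would either fix the randomness and argue the deterministic bound holds for every fixing, or appeal to the fact that monotonicity and exact welfare-preservation are pointwise properties that must hold for the realized allocation rule, so a distribution over transformations still cannot avoid the contradiction on the designed instance.

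\textbf{Main obstacle.} The hard part will be engineering $\mathcal{F}$ and the forced responses so that the monotonicity violation is genuinely unavoidable no matter how cleverly $\mathcal{T}$ breaks ties or interpolates between queried inputs — i.e., ruling out the possibility that $\mathcal{T}$ ``hedges'' by choosing, at each input, among several equal-welfare allocations in a way that happens to be globally monotone. I expect to address this by making all the relevant welfare-maximizing allocations \emph{unique} at the critical inputs (choosing $l,h$ and the set system so no two distinct feasible allocations tie in welfare at those inputs), which collapses $\mathcal{T}(\mathcal{A})$'s freedom at exactly the points where the contradiction is sprung, while leaving it total freedom elsewhere — freedom that is useless because the conflict is already local to a single pair of one-coordinate-apart inputs. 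Verifying that such $l,h$ exist for arbitrary prescribed values (not just a convenient ratio) is the remaining technical point, and I would handle it by a perturbation / genericity argument on the weights of the generating sets.
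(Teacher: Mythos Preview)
Your proposal has a genuine gap: the claim that ``the conflict is already local to a single pair of one-coordinate-apart inputs'' is where the argument breaks. Suppose you force $\mathcal{T}(\mathcal{A})(\textbf{v}^\star)=\textbf{x}^\star$ with $\textbf{x}^\star$ the unique welfare-maximizer at the all-$h$ input, and then drop agent~$i$ to $l$ to get $\textbf{v}'$. If $x^\star_i=0$, the welfare of $\textbf{x}^\star$ at $\textbf{v}'$ is unchanged while every allocation with a~1 in coordinate~$i$ loses $h-l$; hence $\textbf{x}^\star$ is still a maximizer at $\textbf{v}'$ and $\mathcal{T}$ may simply return $\textbf{x}^\star$ again. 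If $x^\star_i=1$, then any allocation $\textbf{y}$ that overtakes $\textbf{x}^\star$ at $\textbf{v}'$ must have $y_i=0$; but then outputting $\textbf{y}$ at $\textbf{v}'$ (agent~$i$ gets~0 at value~$l$) and $\textbf{x}^\star$ at $\textbf{v}^\star$ (agent~$i$ gets~1 at value~$h$) is \emph{monotone}, not a violation. So a single-coordinate flip can never by itself pin $\mathcal{T}$ into non-monotonicity, regardless of how you tune $l,h$ or enforce uniqueness. Your adversary/perturbation outline does not escape this, because it still tries to spring the trap across one edge.

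The paper's proof supplies exactly the missing idea: a \emph{chain} of monotonicity constraints. It places a randomly chosen ``special'' allocation $C$ at one randomly chosen ``special'' input $B_1$, and a canonical allocation $D$ everywhere else. Welfare-preservation forces $\mathcal{T}$ to output $C$ at $B_1$; then, flipping one $h$ to $l$ at a time in positions where $C$ has a~0 but $D$ has a~1, monotonicity forces $\mathcal{T}$ to keep returning a subset of $C$ (not $D$) all the way down to an input $B_{m/2+1}$ that is $\Theta(m)$ Hamming-distance from $B_1$. At that far input $\mathcal{T}$ still needs to produce a large subset of $C$, but now an information-theoretic counting argument shows that polynomially many (even adaptive) queries to $\mathcal{A}$ or $\mathcal{F}$ find the hidden $B_1$ or the random support of $C$ only with exponentially small probability; ``fake'' allocations are added to $\mathcal{F}$ so that unsuccessful $\mathcal{F}$-queries leak nothing. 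None of these ingredients --- the chain propagation, the randomized hiding of $C$ and $B_1$, or the padding of $\mathcal{F}$ --- appears in your plan, and the local contradiction you aim for is not actually there.
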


Before we go into the formal proof, we give a high-level intuition. We handle the easier case of deterministic and non-adaptive transformations before moving to general transformations. We will consider a class of algorithms from which one algorithm $\mathcal{A}$ is selected randomly. For each algorithm $\mathcal{A}$, our feasibility set will contain two maximal allocations $C$ and $D$; the only allocations in $\mathcal{F}$ are those that are subsets of $C$ or $D$. The allocation $C$ is only returned at a ``special input'' $B_1$, and the allocation itself as well as the special input depends on the algorithm $\mathcal{A}$ we choose from the class. At any other input, the allocation $D$ is returned. Using monotonicity of the transformation, we will show that at an input that is ``far away'' from $B_1$, the transformation still needs to know the allocation $C$ in order to preserve the full welfare. However, because of the randomization, the probability the transformation can discover either the allocation $C$ or the special input $B_1$ when it is given the faraway input is exponentially low, meaning that the transformation cannot achieve its goal.

\begin{proof}

Assume first that the transformation $\mathcal{T}$ is deterministic and non-adaptive. Suppose that the input is of length $n=4m$. The algorithm $\mathcal{A}$ will be chosen randomly. To begin, we define the preliminary algorithm $\mathcal{A}$ as follows.
\begin{itemize}
\item At input $B_1=\overbrace{hh\dots h}^{2m}\overbrace{hh\dots h}^m\overbrace{ll\dots l}^m$, $\mathcal{A}$ returns output\\$C=\overbrace{10110\dots 1}^{m+1 \text{ are 1's}}\overbrace{00\dots 0}^m\overbrace{11\dots 1}^m$, where the $m+1$ 1's in the first $2m$ positions are uniformly randomized. Note that the randomization is in the step of choosing the algorithm $\mathcal{A}$, but the resulting algorithm $\mathcal{A}$ itself is a deterministic algorithm. We call this input the \textit{special input};

\item At any other input, $\mathcal{A}$ returns $D=\overbrace{00\dots 0}^{2m}\overbrace{11\dots 1}^m\overbrace{11\dots 1}^m$.
\end{itemize}
In the real algorithm $\mathcal{A}$, we permute uniformly at random the last $3m/2$ positions of the inputs as well as the corresponding allocations. Again, this permutation is only for choosing the (deterministic) algorithm $\mathcal{A}$.

Consider any algorithm $\mathcal{A}$ that we might choose, and assume without loss of generality that in the special input of this algorithm, the $l$'s are in the last $m$ positions. To preserve the welfare at input $B_1$, $\mathcal{T}$ must return $C=\mathcal{A}(B_1)$ itself, since returning any strict subset of $C$ or returning $D$ (or any subset of $D$) would yield a lower welfare.

Next, consider the input $B_2=\overbrace{hh\dots h}^{2m}\overbrace{hh\dots hl}^m\overbrace{ll\dots l}^m$, with the only change from $B_1$ being in the rightmost position of the middle block. By monotonicity, $\mathcal{T}$ must return 0 in that position. In order to preserve the welfare at $B_2$, $\mathcal{T}$ must return a subset of $C$, since otherwise it would have to return a strict subset of $D$, which would yield a lower welfare.

Now, consider inputs $B_3=\overbrace{hh\dots h}^{2m}\overbrace{hh\dots hll}^m\overbrace{ll\dots l}^m$, $B_4=\overbrace{hh\dots h}^{2m}\overbrace{hh\dots hlll}^m\overbrace{ll\dots l}^m$, and so on with one extra $l$ in each input, up to $B_{m/2+1}=\overbrace{hh\dots h}^{2m}\overbrace{hh\dots h}^{m/2}\overbrace{ll\dots l}^{m/2}\overbrace{ll\dots l}^m$. By a similar argument, $\mathcal{T}$ must return a subset of $C$ at all of these inputs. In particular, $\mathcal{T}$ must return a subset of $C$ at $B_{m/2+1}$. 

In order to preserve the welfare at $B_{m/2+1}$, $\mathcal{T}$ must return at least $m/2$ 1's in the first $2m$ positions. If $\mathcal{T}$ tries to find such an allocation by querying $\mathcal{F}$, then since the positions of the $m+1$ 1's in the first $2m$ positions are chosen randomly, the probability of success for each query is at most $\frac{\binom{m+1}{m/2}}{\binom{2m}{m/2}}<\frac{1}{poly(n)}$. Hence $\mathcal{T}$ will succeed within a polynomial number of queries with low probability. 

Alternatively, $\mathcal{T}$ might try to find the special input $B_1$ by querying $\mathcal{A}$. However, recall that we randomly permute the last $3m/2$ positions of the inputs and their corresponding allocations. In order to find the special input $B_1$, $\mathcal{T}$ must correctly choose $m/2$ out of the $3m/2$ positions to change to $h$. Once again, the probability of success for each query is less than $\frac{1}{poly(n)}$. Hence $\mathcal{T}$ will again succeed within a polynomial number of queries with low probability. Combined with the previous paragraph, this means that $\mathcal{T}$ is unlikely to succeed if it is deterministic and non-adaptive.

Now assume that $\mathcal{T}$ is possibly adaptive. We will make sure that for each ``unsuccessful'' query to $\mathcal{A}$ or $\mathcal{F}$, $\mathcal{T}$ learns no new information. This is already the case for queries to $\mathcal{A}$, as for any unsuccessful query, $\mathcal{T}$ simply finds out the canonical allocation $D$. To prevent $\mathcal{T}$ from learning new information from unsuccessful queries to $\mathcal{F}$, we insert ``fake'' allocations into $\mathcal{F}$. In particular, we insert all allocations with $m/2-1$ 1's in the first $2m$ positions and all $1$'s in the last $m$ positions (before the permutation of indices) into $\mathcal{F}$, as well as subsets of these allocations. As such, a successful query to $\mathcal{F}$ that contains at most $m/2-1$ 1's in the first $2m$ positions does not give $\mathcal{T}$ any useful information.

Finally, assume that $\mathcal{T}$ is allowed to be randomized. Since our algorithm $\mathcal{A}$ is also chosen randomly by uniformly permuting the positions of 1's in the allocation as well as permuting the indices, the probability of success of $\mathcal{T}$ in guessing the special input or an allocation that returns at least $m/2$ 1's in the first $2m$ positions cannot increase even if it randomizes its choices. \qed
\end{proof}

\subsection{Constant-fraction welfare-preserving transformations}

Even though Theorem \ref{thm:100pointwise} shows that it is impossible for a transformation to preserve the full welfare pointwise, it would still be interesting if the transformation can preserve a \emph{constant fraction} of the welfare pointwise. However, as we show in this subsection, it turns out that this weaker requirement is also impossible to satisfy. Our next two theorems show that preserving a constant fraction pointwise is impossible when the ratio $h/l$ is sublinear, i.e., $h/l\in O(n^\alpha)$ for some $\alpha\in[0,1)$. We first consider the case where $h/l$ is constant (Theorem \ref{thm:constratioconstpointwise}), and later generalize to $h/l\in O(n^\alpha)$ for some $\alpha\in[0,1)$ (Theorem \ref{thm:nonconstratioconstpointwise}). Together with Theorem \ref{thm:algotwo}, which exhibits an example of a constant-fraction welfare-preserving transformation when $h/l\in\Omega(n)$, we have a complete picture of constant-fraction welfare-preserving transformations when there are two input values.

\begin{theorem}
\label{thm:constratioconstpointwise}
Let $l<h$ be such that $h/l$ is constant. There does not exist a polynomial-time, monotone, constant-fraction welfare-preserving transformation, even when the transformation is allowed to be randomized and adaptive and make a polynomial number of queries to $\mathcal{F}$.
\end{theorem}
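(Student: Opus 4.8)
The plan is to amplify the lower bound behind Theorem~\ref{thm:100pointwise}, whose construction only costs the transformation a constant factor, into one that costs it a super-constant factor; since it suffices to beat every prescribed constant $c$, the hard instance is allowed to depend on $c$. The skeleton is the same as in Theorem~\ref{thm:100pointwise}: a downward-closed $\mathcal{F}$ built from a ``hidden'' block of positions $H$, a ``middle'' block $M$, and a small ``base'' block $L$, with the positions of $M\cup L$ and the support of the special allocation inside $H$ permuted uniformly at random so that neither the special input nor the relevant feasible allocation can be located by a polynomial number of queries; a randomly chosen algorithm $\mathcal{A}$ that on a special input returns a large allocation $C$ supported on a random subset of $H$ together with $L$ (so $C$ avoids $M$ entirely) and returns allocations supported on $M\cup L$ elsewhere; and ``fake'' allocations inserted into $\mathcal{F}$ so that unsuccessful feasibility queries leak no information, with the randomized and adaptive cases reduced to the deterministic one exactly as in Theorem~\ref{thm:100pointwise}.

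Two quantitative changes are meant to drive the amplification. First, I would make $C$ serve $\omega(1)$ times as many positions of $H$ as $M$ has positions; since $h/l$ is a constant, $C$'s welfare at the special input is then a super-constant multiple of the welfare of any allocation supported on $M$, so that preserving even a $c$-fraction of $\mathcal{A}$'s welfare there \emph{forces} the transformation to output a large subset of $C$, hence to allocate $0$ to all of $M$ — precisely the conclusion that the construction of Theorem~\ref{thm:100pointwise} stops giving once we relax from full welfare to a constant fraction. Second, I would take $M$ to be a constant fraction of all the agents and $L$ to be negligibly small, and arrange the chain of inputs obtained by pushing $M$ down to $l$ one coordinate at a time so that at the end of the chain $\mathcal{A}$ returns an allocation serving all of $M$ (welfare $\Theta(nl)$), while the transformation — pinned to $0$ on $M$ by monotonicity and unable to locate the hidden allocation on $H$ — can realize only $o(n)\cdot h = o(nl)$ welfare. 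This yields a loss factor that is $o(1)$ (in particular smaller than $c$), which together with Theorem~\ref{thm:algotwo} is exactly what is needed.

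The step I expect to be the main obstacle is propagating the ``$0$ on $M$'' conclusion all the way along the chain. In Theorem~\ref{thm:100pointwise} this is carried out by a repeated monotonicity-plus-welfare argument: having $\mathcal{T}$ allocate $0$ on $M$ at one input forces one further coordinate of $M$ to $0$ at the next input, and preserving the \emph{full} welfare of the algorithm's current output then forces $\mathcal{T}$ back to a subset of $C$ and hence to $0$ on all of $M$ again. With only a constant-fraction guarantee, dropping a single coordinate of the algorithm's output is cheap, so this re-derivation no longer goes through unchanged; the fix must ensure that the ``target'' presented by $\mathcal{A}$ along the chain stays large enough that $\mathcal{T}$ is genuinely forced away from $M$ at every step — e.g.\ by never letting the algorithm's output along the chain shrink to something that a subset of the still-high part of $M$ can match, possibly via several hidden off-$M$ allocations or a layered/recursive version of the gadget — while still guaranteeing that $\mathcal{T}$ cannot discover any of these large targets (through feasibility queries or through queries to $\mathcal{A}$ at inputs it can construct), which is where the random permutation of positions and the mutual incomparability of the large allocations must be used carefully. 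Striking this balance, a large and unavoidable target at every input but no usable large allocation away from $M$ until the special input is reached, is the crux of the proof.
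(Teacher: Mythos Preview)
Your plan carries over the Theorem~\ref{thm:100pointwise} template almost verbatim --- hidden allocation $C$ on a separate block $H$, default allocation on $M$, chain that pushes $M$ from $h$ down to $l$ --- and you correctly isolate the place where it breaks: with only a constant-fraction guarantee, the step ``$\mathcal{T}$ must return a subset of $C$ at $B_2$'' no longer follows, because returning the default allocation on $M$ minus the one coordinate monotonicity just zeroed is still a constant fraction of $\mathcal{A}(B_2)$. Your proposed fixes (keep $\mathcal{A}$'s target large along the chain, or layer several hidden allocations) do not close this gap: monotonicity pins only \emph{one} coordinate of $M$ to $0$ per step, so after $i$ steps $\mathcal{T}$ is still free to serve all of $M$ except $i$ positions, which remains a constant fraction of any target supported on $M$ as long as $i=o(|M|)$. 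The ``pinned to $0$ on all of $M$'' conclusion you rely on is simply not available when the chain moves downward.

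The paper sidesteps this by reversing both design choices. First, the chain goes \emph{upward}: the special input $B_1$ has $l$'s in the middle block, and $B_2,B_3,\dots$ are obtained by flipping selected middle-block positions from $l$ to $h$, so monotonicity forces $1$'s to \emph{persist} rather than $0$'s. Second, and crucially, the hidden allocation $C$ is supported \emph{on the middle block itself} ($m^4$ positions out of the $m^5$ $l$-positions), while the default $D$ lives on a much smaller separate block of size $m^3$. At $B_1$, $C$'s welfare $\Theta(m^4 l)$ dominates $D$'s welfare $\Theta(m^3 h)$ since $h/l$ is constant, so constant-fraction preservation forces a large subset of $C$. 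At $B_{i+1}$, the position just flipped to $h$ carried a $1$ of $\mathcal{T}(B_i)$, so monotonicity puts a $1$ in the middle block; that single $1$ already rules out every subset of $D$, and then constant-fraction preservation of $\mathcal{A}(B_{i+1})=D$ forces many further $1$'s in the middle block --- in particular another $1$ sitting on some $l$, which becomes the next position to flip. The propagation is thus automatic, with no layered gadgets needed. A second difference is methodological: rather than random permutations, the paper picks both the special input $B_1$ and the support of $C$ \emph{adversarially} via counting (there are more candidate $B_1$'s than $|Y|\cdot\mathrm{poly}(n)$ query slots, and more candidate supports for $C$ than can be hit by $\mathcal{T}$'s feasibility queries across all of $Y$), handling the deterministic case directly and then extending to adaptive and randomized $\mathcal{T}$ as in Theorem~\ref{thm:100pointwise}.
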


We give an outline of the proof, which bears a resemblance to the proof of Theorem \ref{thm:100pointwise} but contains differences in the execution. We start with a deterministic and non-adaptive transformation $\mathcal{T}$. Our feasibility set will contain two maximal allocations $C$ and $D$, as well as subsets of any of these two allocations. The ``special allocation'' $C$ is only returned at the ``special input'' $B_1$. Both the special allocation and the special input are chosen based on the queries that $\mathcal{T}$ makes to $\mathcal{A}$ and $\mathcal{F}$ at various inputs. At any other input, the allocation $D$ is returned. Using the monotonicity of the transformation, we find another input $B_{m+1}$ far away from $B_1$ where we have to return a subset of $C$ that is not a subset of $D$. By our choice of the special allocation and special input, we ensure that at input $B_{m+1}$, the transformation neither makes a query at $B_1$ nor makes a successful query to $\mathcal{F}$. This implies that $\mathcal{T}$ cannot succeed within a polynomial number of queries. 

\begin{proof}
Assume first that the transformation $\mathcal{T}$ is deterministic and non-adaptive. Suppose that the input is of length $n=m^6+m^3$. Note that the sets $poly(m)$ and $poly(n)$ are identical. 

Let $X$ denote the set of inputs with $m^5$ $l$'s in the first $m^6$ positions followed by $m^3$ $h$'s, and let $Y$ denote the set of inputs with $m^5-m$ $l$'s in the first $m^6$ positions, followed by $m^3$ $h$'s. We have $|X|=\binom{m^6}{m^5}$ and $|Y|=\binom{m^6}{m^5-m}$. Since $|X|>|Y|\cdot poly(n)$, there exists an input in $X$ that is not in the (polynomially long) query list of $\mathcal{T}$ for any input in $Y$. Assume without loss of generality that $B_1$, defined below, is one such input.

Consider the algorithm $\mathcal{A}$ as follows:
\begin{itemize}
\item At input $B_1=\overbrace{hh\dots h}^{m^6-m^5}\overbrace{ll\dots l}^{m^5}\overbrace{hh\dots h}^{m^3}$, $\mathcal{A}$ returns \\$C=\overbrace{00\dots 0}^{m^6-m^5}\overbrace{10110\dots 1}^{m^4 \text{ are 1's}}\overbrace{00\dots 0}^{m^3}$, where the $m^4$ 1's in the $m^5$ positions of the middle block are to be chosen later. We call this input the \textit{special input}, and the corresponding allocation the \textit{special allocation};

\item At any other input, $\mathcal{A}$ returns $D=\overbrace{00\dots 0}^{m^6-m^5}\overbrace{00\dots 0}^{m^5}\overbrace{11\dots 1}^{m^3}$.
\end{itemize}

For large enough $m$, to preserve a constant fraction of the welfare at input $B_1$, $\mathcal{T}$ cannot return a subset of $D$. Hence $\mathcal{T}$ must return a subset of $C=\mathcal{A}(B_1)$.

Consider the input $B_2=\overbrace{hh\dots h}^{m^6-m^5}\overbrace{hll\dots l}^{m^5}\overbrace{hh\dots h}^{m^3}$, with the only change from $B_1$ being in the leftmost position of the middle block. (Here we choose the leftmost position because this position of $C$ contains a 1 in the particular choice of $C$ above; otherwise we choose any position of $C$ that contains a 1.) By monotonicity, $\mathcal{T}$ must return a 1 in the middle block for $B_2$, so it cannot return a subset of $D$. Moreover, for large enough $m$, to preserve a constant fraction of the welfare at $B_2$, $\mathcal{T}$ must return at least $m^2$ 1's in the middle block. In particular, there is still a 1 corresponding to an $l$ in the middle block. 

Similarly, we can define inputs $B_3,B_4,\dots,B_{m+1}$ so that $B_i$ has $i-1$ $h$'s in the middle block and there is still a 1 corresponding to an $l$ in the middle block. For each of these inputs, $\mathcal{T}$ must return at least $m^2$ 1's in the middle block. Note also that $B_{m+1}\in Y$.

Now, the special allocation is at $B_1$, and by our assumption above, $\mathcal{T}$ does not find out by querying $\mathcal{A}$ at $B_1$ when it is presented with $B_{m+1}\in Y$. The only other possibility for $\mathcal{T}$ to discover the special allocation is to query $\mathcal{F}$. There are $\binom{m^5}{m}$ inputs of $Y$ whose first $m^6-m^5$ positions are all $h$'s, and these are the only inputs at which $\mathcal{T}$ can benefit from a ``successful'' query to $\mathcal{F}$. When $\mathcal{T}$ makes a query at each of these inputs, it must pick an allocation with at least $m^2$ $1$'s in the $m^5$ positions. From the perspective of us preventing the transformation $\mathcal{T}$ from achieving its goal, this rules out at most $\binom{m^5-m^2}{m^4-m^2}$ allocations. The total number of possible allocations that we can choose is $\binom{m^5}{m^4}$. Since $\binom{m^5}{m}\cdot\binom{m^5-m^2}{m^4-m^2}\cdot poly(n)<\binom{m^5}{m^4}$, this means that for some choice of 1's in $m^4$ out of the $m^5$ positions in the middle block, $\mathcal{T}$ does not succeed in finding an allocation with at least $m^2$ $1$'s. By making this choice, we ensure that $\mathcal{T}$ cannot succeed within a polynomial number of queries.

Finally, we generalize to adaptive and randomized transformations $\mathcal{T}$ in a similar way as in Theorem \ref{thm:100pointwise}. \qed
\end{proof}

Using a similar construction, we can generalize the impossibility result to the case where $h/l\in O(n^\alpha)$ for any $\alpha\in [0,1)$.

\begin{theorem}
\label{thm:nonconstratioconstpointwise}
Let $l<h$ be such that $h/l\in O(n^\alpha)$ for some $\alpha\in [0,1)$. There does not exist a polynomial-time, monotone, constant-fraction welfare-preserving transformation, even when the transformation is allowed to be randomized and adaptive and make a polynomial number of queries to $\mathcal{F}$.
\end{theorem}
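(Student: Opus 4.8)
The plan is to reuse the construction in the proof of Theorem~\ref{thm:constratioconstpointwise} essentially unchanged, rescaling only the relative sizes of the three blocks. Recall that there the special allocation $C$ (returned by $\mathcal{A}$ only at the special input $B_1$) beats the default allocation $D$ at $B_1$ by a factor of order $m/r$ with $r:=h/l$; when $r$ is constant this factor diverges, but once $r$ may be as large as $\Theta(n^\alpha)$ we must make the block carrying the hidden ones of $C$ much longer relative to the block carrying $D$, so that $C$ still overwhelms $D$ in welfare at $B_1$. Concretely, writing $\epsilon:=1-\alpha>0$, I would take the input to consist of a value-$h$ prefix of length $\Theta(n)$, a value-$l$ middle block of length $K:=n^{1-\epsilon/4}$ on which $C$ places $J:=n^{1-\epsilon/2}$ ones (the positions fixed adversarially at the end of the argument), and a value-$h$ last block of length $L:=n^{\epsilon/10}$ on which $D$ places all its ones; $\mathcal{A}$ returns $C$ at $B_1$ and $D$ everywhere else, and $\mathcal{F}$ is the downward closure of $\{C,D\}$ together with the ``fake'' allocations (roughly, allocations with few ones in the middle block) needed for the adaptive case, introduced analogously to Theorems~\ref{thm:100pointwise} and~\ref{thm:constratioconstpointwise}.

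With these parameters the four steps of Theorem~\ref{thm:constratioconstpointwise} carry over. First, since $Jl/(Lh)=n^{1-\epsilon/2}/(n^{\epsilon/10}\cdot n^{\alpha})=n^{2\epsilon/5}\to\infty$, preserving any constant fraction $c$ of the welfare at $B_1$ forces $\mathcal{T}$ to return a subset of $C$. Second, one builds the monotonicity chain $B_1,B_2,\dots,B_{t+1}$ that flips $t:=\lceil\log^2 n\rceil$ coordinates of the middle block from $l$ to $h$ one at a time, each flip placed at a coordinate where $\mathcal{T}$'s current output has a $1$ so that monotonicity carries it forward; the endpoint $B_{t+1}$ lies in the analogue of the set $Y$, and because $t<cL$ for large $n$, preserving a $c$-fraction of the welfare of $\mathcal{A}(B_{t+1})=D$ at $B_{t+1}$ forces $\mathcal{T}$ to output a subset of $C$ carrying at least $s:=(cL-t)\cdot r=\Theta(Lr)$ ones at value-$l$ coordinates of the middle block. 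Third, the two counting inequalities $\binom{n-L}{K}>\binom{n-L}{K-t}\cdot poly(n)$ (so that $B_1$ can be placed outside $\mathcal{T}$'s query list at $B_{t+1}$, exactly as in Theorem~\ref{thm:constratioconstpointwise}) and $\binom{K}{t}\cdot\binom{K-s}{J-s}\cdot poly(n)<\binom{K}{J}$ (so that some placement of the $J$ ones of $C$ foils every relevant $\mathcal{F}$-query of $\mathcal{T}$) then yield that $\mathcal{T}$ cannot produce such an output within polynomially many queries; after taking logarithms these amount to $t\log\frac{n-L-K}{K}$ and $s\log\frac{K}{J}$ dominating $t\log K+O(\log n)$, which hold with a polynomial margin because $\alpha<1$ keeps the exponents $\epsilon/10$, $1-\epsilon/2$, $1-\epsilon/4$, and $1$ pairwise separated. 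Fourth, the passage from deterministic, non-adaptive $\mathcal{T}$ to adaptive and randomized $\mathcal{T}$ is verbatim as in Theorems~\ref{thm:100pointwise} and~\ref{thm:constratioconstpointwise}.

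The only genuinely new point, and the step I expect to demand the most care, is verifying that the welfare dominance $J\gg Lr$—which must now absorb $r=\Theta(n^{\alpha})$—is simultaneously compatible with the two binomial estimates, which require $K\gg J$, prefix length $\gg K$, and $s\gg t$ with $t\in\omega(\log n)$. This compatibility is precisely what $\alpha<1$ buys: it guarantees a nonempty window of exponents with $Lr\ll J\ll K\ll n$ and $t\in\omega(\log n)\cap o(L)$, and the choice above ($L=n^{\epsilon/10}$, $J=n^{1-\epsilon/2}$, $K=n^{1-\epsilon/4}$) is one comfortable point in it. As $\alpha\to1$ this window closes, consistently with Theorem~\ref{thm:algotwo}, which exhibits a constant-fraction welfare-preserving transformation once $h/l\in\Omega(n)$; so the restriction to $\alpha<1$ is intrinsic rather than an artifact of the construction.
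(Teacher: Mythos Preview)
Your proposal is correct and follows essentially the same route as the paper: both extend the construction of Theorem~\ref{thm:constratioconstpointwise} by rescaling the three block lengths and the number of hidden ones, and then check the same three conditions---welfare dominance of $C$ over $D$ at $B_1$, a counting bound so that $B_1$ escapes $\mathcal{T}$'s $\mathcal{A}$-query lists at the chain's endpoint, and a counting bound so that some placement of $C$'s ones foils all relevant $\mathcal{F}$-queries. The paper phrases this abstractly via exponents $b>c>d>e>1$ (block sizes $m^b,m^c,m^e$, with $m^d$ ones and chain length $m$) and verifies that the three conditions are jointly satisfiable for any $\alpha<1$; you instead fix concrete exponents in terms of $\epsilon=1-\alpha$ and use a polylogarithmic chain length $t=\lceil\log^2 n\rceil$ in place of the paper's $n^{1/b}$, but this is only a difference in packaging.
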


\begin{proof}
We extend the example for the case where $h/l$ is constant (Theorem \ref{thm:constratioconstpointwise}). Let $b>c>d>e>1$ be constants that we will choose later. Suppose that the three blocks have length $m^b, m^c$, and $m^e$, respectively, and that there are $m^d$ 1's in the middle block. We construct inputs $B_1,B_2,\dots,B_{m+1}$ as before. In order for the same argument to go through, we need three conditions:
\begin{enumerate}
\item For the transformation to necessarily return a subset of $C$ at the special input $B_1$, we need $m^d\geq m^e(m^b+m^c+m^e)^\alpha$. This translates roughly to $d\geq e+b\alpha$. Since $\alpha<1$, it is possible to choose such $b>d>e$.

\item For queries to $\mathcal{A}$ to succeed with low probability, we need $\binom{m^b+m^c}{m^c-m}\cdot poly(n)<\binom{m^b+m^c}{m^c}$. This always holds for $b>c>1$.

\item For queries to $\mathcal{F}$ to succeed with low probability, we need $\binom{m^c}{m}\cdot\binom{m^c-m^e}{m^d-m^e}\cdot poly(n)<\binom{m^c}{m^d}$. This translates roughly to $m^{(c-d)m^e-cm}>poly(n)$, which always holds for $c>d>e>1$.
\end{enumerate}
Hence we can choose $b,c,d,e$ so that all three conditions hold, and our proof is complete. \qed
\end{proof}

Note that the examples so far cannot be used to show the non-existence of a monotone (constant-fraction) approximation-ratio-preserving transformation. Indeed, consider the transformation that simply returns the canonical allocation $D$. The points at which this transformation fails to preserve the welfare of the algorithm are points at which the algorithm is optimal, and elsewhere the algorithm is far from optimal, implying that the approximation ratio is preserved.

\subsection{Approximation-ratio-preserving transformations}

In this subsection, we consider a weaker benchmark than preserving full welfare pointwise: preserving the approximation ratio. We show that this benchmark is still impossible to satisfy if we restrict the transformation $\mathcal{T}$ to querying inputs at Hamming distance less than some function $f(n)\in o(n)$ from its input, and disallow $\mathcal{T}$ from querying $\mathcal{F}$.

\begin{theorem}
\label{thm:worstcasepointwisehamming}
Let $l<h$ be such that $h/l$ is constant, and let $f(n)\in o(n)$. There does not exist a polynomial-time, monotone, approximation-ratio-preserving transformation $\mathcal{T}$. The transformation $\mathcal{T}$ is allowed to be randomized and adaptive, but it cannot make queries to $\mathcal{F}$ and can only make queries to $\mathcal{A}$ on inputs that are of Hamming distance less than $f(n)$ from the original input.
\end{theorem}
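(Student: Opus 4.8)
The plan is to reuse the overall template of Theorems~\ref{thm:100pointwise} and~\ref{thm:constratioconstpointwise} — a feasibility set with a ``hidden'' sparse allocation $C$ concealed among the subsets of a large block, an algorithm $\mathcal{A}$ that reveals $C$ only on one special input $B_1$ and returns a canonical allocation everywhere else, and a chain of inputs along which monotonicity forces any candidate transformation to keep returning subsets of $C$ — but with one essential change of benchmark. Since $\mathcal{T}$ now only has to preserve the \emph{approximation ratio} of $\mathcal{A}$, the construction is vacuous unless $\mathcal{A}$ itself is a constant-factor approximation: the algorithms of Theorems~\ref{thm:100pointwise}--\ref{thm:nonconstratioconstpointwise} all have $approx_{\mathcal{F}}(\mathcal{A})\to 0$, and (recall the observation following Theorem~\ref{thm:nonconstratioconstpointwise}) for those feasibility sets the transformation that simply returns the canonical allocation is already monotone and ratio-preserving. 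So the heart of the construction is an algorithm $\mathcal{A}$ whose worst-case ratio is a constant $\rho_0=\rho_0(h/l)$ bounded away from $0$ but which is non-monotone in a way that no transformation restricted to $o(n)$-radius Hamming balls can repair.

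Concretely, I would enlarge the feasibility set with a ``bulk'' allocation $D$ spanning a constant fraction of the agents, with support disjoint from that of $C$, and choose the block sizes and values so that (i) $D$ is within a $\rho_0$ factor of optimal on every input where $\mathcal{A}$ returns it, so $approx_{\mathcal{F}}(\mathcal{A})=\rho_0$, and yet (ii) on $B_1$ (and a small neighborhood of it) $C$ beats $D$ by more than a $1/\rho_0$ factor, so any ratio-preserving $\mathcal{T}$ must set $\mathcal{T}(B_1)$ to a subset of $C$. Then I would run a chain $B_1\to B_2\to\cdots\to B_T$ of length $T=\Theta(n)$, each step flipping one coordinate so that, via monotonicity, $\mathcal{T}(B_i)$ again has a $1$ on some coordinate of $C$'s support lying outside $D$ and hence must be a subset of $C$. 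By the time we reach $B_T$, the input differs from $B_1$ in $\Theta(n)$ coordinates, so by $f(n)\in o(n)$ the transformation at $B_T$ cannot query $B_1$ — the only input revealing $C$ — and, crucially, since it cannot query $\mathcal{F}$ either, the only allocations it is certain are feasible near $B_T$ are the subsets of $D$ that $\mathcal{A}$ returns there. Thus $\mathcal{T}(B_T)$ is pinned to a subset of $C$ but can only safely name one supported on the few coordinates the chain has explicitly exposed, which carries welfare far below $\rho_0\cdot OPT_{\mathcal{F}}(B_T)$ — contradicting ratio preservation. Adaptivity and randomization of $\mathcal{T}$ are handled exactly as in the earlier proofs: every unsuccessful $\mathcal{A}$-query only reveals $D$, and the support of $C$ is chosen at random (or, in the deterministic sub-case, by a counting argument against the polynomially many queries) so $\mathcal{T}$ cannot have guessed it.

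The step I expect to be the real obstacle is satisfying (i) and (ii) at once. They pull against each other: the inputs on which $\mathcal{A}$'s ratio is worst are precisely those where $C$'s block carries much value and $D$'s block carries little — the configuration of $B_1$ — so a naive two-allocation construction makes $D$'s ratio at $B_1$ equal to $approx_{\mathcal{F}}(\mathcal{A})$ rather than strictly below it, and then $\mathcal{T}(B_1)$ is not forced onto $C$. Getting around this seems to require either a delicate numerical balance of the parameters or, more robustly, replacing the single hidden $C$ by a small family of hidden allocations equipped with a genuinely non-monotone selection rule — a plain ``threshold/$\max$'' rule over a fixed pair of allocations is always monotone — with the constant $\rho_0$ then depending on $h/l$, which is also why the theorem needs $h/l=\Theta(1)$. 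Once such an $\mathcal{A}$ is fixed, the chain and monotonicity steps are routine adaptations of what has already been done.
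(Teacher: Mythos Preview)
Your proposal has a genuine gap, and it sits exactly where you yourself flag the difficulty. You are right that the constructions of Theorems~\ref{thm:100pointwise}--\ref{thm:nonconstratioconstpointwise} are useless here because their $approx_{\mathcal{F}}(\mathcal{A})\to 0$, and right that any working $\mathcal{A}$ must have constant ratio. But your proposed fix --- keep the hidden sparse $C$ revealed only at $B_1$, and resolve the tension between ``$D$ is $\rho_0$-optimal everywhere'' and ``$C$ beats $D$ by more than $1/\rho_0$ at $B_1$'' by passing to a \emph{family} of hidden allocations with a ``genuinely non-monotone selection rule'' --- is more complicated than needed and rests on a false premise. You assert that ``a plain threshold/$\max$ rule over a fixed pair of allocations is always monotone''; for a $\max$ rule this is true, but for a threshold rule on the number of $h$'s it is \emph{false} whenever the two allocations are not nested, and that is precisely the construction the paper uses.

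Concretely, the paper takes $n=2m$, two maximal allocations $1^m0^m$ and $0^m1^m$, and lets $\mathcal{A}$ return $0^m1^m$ when the input has at most $m+f(n)$ $h$'s and $1^m0^m$ otherwise. This threshold rule is non-monotone (an agent in the second half who raises $l\to h$ can trigger the switch and lose its $1$), and $approx_{\mathcal{F}}(\mathcal{A})=l/h$, a constant. The key reversal relative to your plan is that at $B_1=h^m l^m$ the algorithm returns the \emph{worse} allocation and sits exactly at its worst ratio: since $\mathcal{T}$ cannot query $\mathcal{F}$ and every input within $f(n)$ of $B_1$ also yields $0^m1^m$, the only allocations $\mathcal{T}$ can certify feasible at $B_1$ are subsets of $0^m1^m$, so ratio preservation forces $\mathcal{T}(B_1)=0^m1^m$ exactly --- no strict inequality between $C$ and $D$ is needed. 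The chain then flips second-half coordinates $l\to h$; monotonicity plus ratio preservation keep a $1$ on some remaining $l$ in the second half at each step. Once more than $2f(n)$ flips have occurred, every input within $f(n)$ of the current $B_i$ lies above the threshold, so $\mathcal{T}$ there sees only $1^m0^m$; yet the chain forces a $1$ in the second half, which $\mathcal{T}$ cannot certify. No randomized hiding of a sparse support is required --- the Hamming restriction alone does all the information hiding --- and your unresolved obstacle disappears once you let $\mathcal{A}$ be \emph{tight} at $B_1$ rather than strictly superior there.
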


\begin{proof}
Suppose that the input is of length $n=2m$, and consider the algorithm $\mathcal{A}$ as follows:
\begin{itemize}
\item At any input with at most $m+f(n)$ $h$'s, $\mathcal{A}$ returns $\overbrace{00\dots 0}^m\overbrace{11\dots 1}^m$;

\item At any other input, $\mathcal{A}$ returns $\overbrace{11\dots 1}^m\overbrace{00\dots 0}^m$.
\end{itemize}

One can check that $approx_\mathcal{F}(\mathcal{A})=l/h$. Let $B_1$ denote the input $\overbrace{hh\dots h}^m\overbrace{ll\dots l}^m$. We have $\mathcal{A}(B_1)=\overbrace{00\dots 0}^m\overbrace{11\dots 1}^m$. At input $B_1$, the transformation $\mathcal{T}$ cannot discover the other (undominated) allocation because of the Hamming distance restriction. Hence it must return a subset of $\mathcal{A}(B_1)$. Moreover, since the approximation ratio of $\mathcal{A}$ is worst at input $B_1$, $\mathcal{T}$ must return exactly $\mathcal{A}(B_1)$.

Consider the input $B_2=\overbrace{hh\dots h}^m\overbrace{hll\dots l}^m$, with the only change from $B_1$ being in the leftmost position of the second half. By monotonicity, $\mathcal{T}$ must return a subset of $\mathcal{A}(B_1)$ at $B_2$. Moreover, for large enough $n$, to preserve the approximation ratio, $\mathcal{T}$ must return at least one 1 on $l$ in the second half.

Similarly, we can define inputs $B_3,B_4,\dots,B_{m+2f(n)+1}$ so that $B_i$ has $i-1$ $h$'s in the second half and there is still a 1 corresponding to an $l$ in the second half. A sufficient condition to guarantee a 1 on an $l$ in the second half is that putting 1's on all $h$'s in the second half is not enough to match the approximation ratio $l/h$. That is, $\frac{2f(n)\cdot h}{mh}<\frac{l}{h}$. Since $f(n)\in o(n)$, we can choose $n$ large enough so that this condition is satisfied. 

For each of the inputs $B_3,B_4,\dots,B_{m+2f(n)+1}$, $\mathcal{T}$ must return a subset of $\mathcal{A}(B_1)$. At input $B_{m+2f(n)+1}$, however, $\mathcal{T}$ cannot discover the allocation $\mathcal{A}(B_1)$ because of the Hamming distance restriction. Hence $\mathcal{T}$ cannot succeed. \qed
\end{proof}

If we are only interested in preserving a constant factor of the approximation ratio, then Theorem \ref{thm:constalloc} shows that this is possible in the same setting of $h/l$ constant, and Theorem \ref{thm:algotwo} shows that it is also possible when $h/l\in\Omega(n)$. It is not clear whether a negative result can be obtained when $h/l\in O(n^\alpha)$ for some $\alpha\in(0,1)$.

\section{Positive Results}
\label{sec:positiveresults}

In this section, we consider the powers of black-box transformations in downward-closed environments. We show that when values are either high or low, and the ratio between high and low is $\Omega(n)$, then there is a monotone transformation that gives a constant approximation to the welfare of any given algorithm pointwise, and therefore also preserves the approximation ratio up to a constant factor. This can be generalized to any constant number of values, and the transformation can be modified so that it also preserves full welfare at a constant fraction of the inputs. While these results are of independent interest, they also serve to demonstrate the limitations of extending the negative results in Section \ref{sec:negativeresults}. For the strongest possible results, we exhibit transformations that do not query $\mathcal{F}$ or operate adaptively.

\subsection{Two values}

We begin by showing that when the private valuations take on two values that are far apart, there exists a transformation that preserves a constant fraction of the welfare at each input. This contrasts with the negative result when the values are close to each other (Theorem \ref{thm:nonconstratioconstpointwise}).

\begin{theorem}
\label{thm:algotwo}
Let $l<h$ be such that $h/l\in\Omega(n)$. There exists a polynomial-time, monotone, constant-fraction welfare-preserving transformation.
\end{theorem}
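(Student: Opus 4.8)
The plan is to realize $\mathcal{T}$ as a randomized mixture of two monotone sub-rules — one aimed at the welfare that $\mathcal{A}(\mathbf{v})$ draws from agents bidding $h$, and one at the welfare it draws from agents bidding $l$ — and to show that each sub-rule captures a constant fraction of its share of $\mathcal{A}(\mathbf{v})$. Since a convex combination of monotone allocation rules is again monotone, the mixture is a legitimate transformation. Fix an input $\mathbf{v}$, let $S$ be the agents bidding $h$ and $L=[n]\setminus S$ those bidding $l$, put $\mathbf{x}=\mathcal{A}(\mathbf{v})$, and split $\mathcal{A}(\mathbf{v})=H+\ell$ with $H=h\sum_{i\in S}x_i$ and $\ell=l\sum_{i\in L}x_i$. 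The one structural fact I will use throughout is that, since $h/l\in\Omega(n)$ (and feasible allocations have bounded entries, as in the motivating settings), the entire low-valued welfare of any allocation is $O(ln)=O(h)$ — no more than a constant times a single unit of high value.

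The high sub-rule is the easy half: on input $\mathbf{v}$, query $\mathcal{A}(\mathbf{v})$ and return $\mathbf{x}$ with every coordinate of $L$ zeroed out. This is feasible by downward-closedness, it is monotone because any agent's allocation moves from $0$ (while it bids $l$) to a nonnegative value (once it bids $h$), and its welfare at $\mathbf{v}$ is exactly $H$; so it already preserves at least half of $\mathcal{A}(\mathbf{v})$ whenever $H\ge\ell$.

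The heart of the proof is the low sub-rule, which must be monotone and yet collect $\Omega(\ell)$ welfare at \emph{every} input — in particular at the all-low input, where the high sub-rule contributes nothing. The naive choice, ``return $\mathbf{x}$ with the coordinates of $S$ zeroed out,'' has welfare exactly $\ell$ but is not monotone: a served agent that raises its bid from $l$ to $h$ may be dropped by $\mathcal{A}$, so its allocation would fall. To repair this I have $\mathcal{T}$ also query $\mathcal{A}$ at the inputs obtained from $\mathbf{v}$ by lowering one high agent at a time, and then return a feasible allocation that (i) retains a constant fraction of the low-valued mass of $\mathbf{x}|_L$ and (ii) dominates, on $S$, the amounts monotonicity forces — the amounts the rule would award those agents when they bid $l$. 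Feasibility is never violated because every modification is a coordinatewise \emph{decrease} from an allocation already observed, and downward-closedness is stable under such decreases. Maintaining monotonicity while keeping $\Omega(\ell)$ is exactly where the linear gap is indispensable: it is what makes the feasibility constraints created by monotonicity affordable, since each such constraint concerns an agent that has become high while the low-valued welfare at stake is only $O(ln)=O(h)$; with a sublinear gap this balance breaks, consistent with the impossibility in Theorem~\ref{thm:nonconstratioconstpointwise}. I expect this step — specifying the allocation returned and verifying monotonicity simultaneously across all neighboring pairs of inputs — to be the main obstacle.

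Finally, $\mathcal{T}(\mathcal{A})(\mathbf{v})$ returns the high sub-rule's output with probability $1/2$ and the low sub-rule's output with probability $1/2$: the expected allocation is monotone, the expected welfare is $\tfrac{1}{2}H+\Omega(\ell)=\Omega(H+\ell)=\Omega(\mathcal{A}(\mathbf{v}))$, and both sub-rules run in polynomial time, use polynomially many non-adaptive queries, and never query $\mathcal{F}$. Since a constant-fraction welfare-preserving transformation is in particular constant-fraction approximation-ratio-preserving, this gives the theorem.
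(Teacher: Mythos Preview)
Your high sub-rule is fine: zeroing the low bidders out of $\mathcal{A}(\mathbf{v})$ is monotone and captures $H$, and when $H\ge h$ the gap $h/l\in\Omega(n)$ indeed makes $H$ a constant fraction of $H+\ell$. The gap is the low sub-rule, which you do not actually construct. You write that specifying the returned allocation and verifying monotonicity across neighboring inputs is ``the main obstacle,'' and it is; but nothing in your outline resolves it. Your proposed recipe --- lower one high agent at a time, then return something that (i) keeps $\Omega(\ell)$ on $L$ and (ii) dominates on $S$ the amounts monotonicity forces --- is circular (the amounts forced on $S$ are themselves outputs of the low sub-rule at neighboring inputs) and there is no argument that a feasible allocation meeting (i) and (ii) simultaneously exists. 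The sentence explaining why the linear gap helps here conflates welfare with feasibility: $nl=O(h)$ bounds the \emph{welfare} carried by low bidders, but monotonicity is a constraint on \emph{allocation amounts}, and the ratio $h/l$ says nothing about whether the downward-closed $\mathcal{F}$ contains a vector with the required coordinates. In the paper's argument the linear gap is used only to bound the welfare lost when zeroing out $l$'s; monotonicity is proved by a purely combinatorial case analysis that does not invoke the gap at all.

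More fundamentally, insisting that the low piece be monotone \emph{in isolation} is what makes your plan hard. The paper avoids this entirely with a single deterministic rule: if $\mathcal{A}(\mathbf{v})$ already places a $1$ on some $h$, zero out the $l$'s; otherwise look at inputs within Hamming distance one, then two, for an allocation that would place a $1$ on some $h$ at $\mathbf{v}$, and if found use it (zeroing the $l$'s); otherwise return $\mathcal{A}(\mathbf{v})$ unchanged. The monotonicity proof is the point: if a conflict arose between neighbors $\mathbf{v}$ (with a $0$ on $h$) and $\mathbf{w}$ (with a $1$ on $l$) at the differing coordinate, then $\mathbf{w}$ fell through to the last case, so $\mathcal{A}(\mathbf{w})$ is an $l$-allocation --- but $\mathcal{A}(\mathbf{w})$ places a $1$ on the differing coordinate, which is an $h$ at $\mathbf{v}$, so $\mathbf{v}$ would have grabbed it at distance one, and then $\mathbf{w}$ would have grabbed whatever $\mathbf{v}$ ended up with at distance two. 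The ``low'' branch is never separately monotone; its safety is guaranteed precisely because any would-be violation triggers the ``high'' branch at the neighbor. That interlocking is the idea your decomposition throws away.
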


\begin{proof}
First we give a high-level intuition of the transformation. A monotone transformation needs to ensure that for any two adjacent inputs, it does not simultaneously occur that a 0 appears on $h$ and a 1 on $l$ in the differing position. As such, we would like to use the downward-closedness to ``zero out'' the $l$'s in a given input to avoid the undesirable situation. If the algorithm already returns a 1 on some $h$ for the input, this can be done while still preserving a constant fraction of the welfare. Otherwise, we look at nearby inputs and take an allocation that would return a 1 on some $h$ for our input, if such an allocation exists.

We now formally describe the transformation $\mathcal{T}$. Given an input $\textbf{v}$, $\mathcal{T}$ proceeds as follows:

\begin{enumerate}
\item If $\mathcal{A}(\textbf{v})$ already has a 1 on $h$, ``zero out'' all the $l$'s, and return that allocation. 
\item Else, if some input adjacent to $\textbf{v}$ has an allocation that would yield a 1 on $h$ at $\textbf{v}$, take that allocation and zero out all the $l$'s, and return that allocation. (Pick arbitrarily if there are many such allocations.)
\item Else, if some input of Hamming distance 2 away from $\textbf{v}$ has an allocation that would yield a 1 on $h$ at $\textbf{v}$, take that allocation and zero out all the $l$'s, and return that allocation. (Pick arbitrarily if there are many such allocations.)
\item Else, return $\mathcal{A}(\textbf{v})$.
\end{enumerate} 

The transformation takes polynomial time, and it only zeroes out the $l$'s when the allocation already has a 1 on $h$. Since $h/l=\Omega(n)$, a constant fraction of the welfare is preserved pointwise.

It remains to show that the resulting allocation rule is monotone. Suppose for contradiction that for some neighboring inputs $\textbf{v}$ and $\textbf{w}$, at the position where the two inputs differ, there exists a 0 on $h$ at $\textbf{v}$, and a 1 on $l$ at $\textbf{w}$. The allocation at $\textbf{w}$ cannot have changed in Steps 1, 2, or 3 of the transformation, and $\textbf{w}$ has 0 on all the $h$'s. But then $\textbf{v}$ must have started with 0 on all the $h$'s, except possibly at the position where the two inputs differ, because otherwise $\textbf{w}$ would have changed in Step 2. At the differing position, however, $\textbf{v}$ must have started with $0$ too, because otherwise it could never become 0. Now, $\textbf{v}$ must have changed in Step 2, because the allocation at $\textbf{w}$ satisfies the criterion in that step. It did not change to the allocation at $\textbf{w}$, because otherwise the non-monotonicity would not have occurred. Hence it must have changed to some other input with a 1 on $h$. But then $\textbf{w}$ should have changed to that allocation too in Step 3, a contradiction. \qed
\end{proof}

Note that the transformation in Theorem \ref{thm:algotwo} might preserve full welfare at a very small number of inputs. Indeed, if $\mathcal{A}$ returns the allocations with all 1's at every input, then $\mathcal{T}$ preserves full welfare at only 2 out of the $2^n$ inputs. Nevertheless, we can improve the transformation so that not only does it preserve a constant fraction of the welfare pointwise, but it also preserves full welfare at a $1/n$ fraction of the inputs. To this end, we will need to make a slightly stronger assumption that $h/l>n$.

\begin{theorem}
\label{thm:algotwobetter}
Let $l<h$ be such that $h/l>n$. There exists a polynomial-time, monotone, constant-fraction welfare-preserving transformation that preserves the full welfare at a $1/n$ fraction of the inputs.
\end{theorem}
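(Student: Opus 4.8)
The plan is to augment the four-step transformation of Theorem~\ref{thm:algotwo} rather than design one from scratch. First I would pin down exactly where that transformation loses welfare: it fails to preserve the \emph{full} welfare of $\mathcal{A}$ at an input $\mathbf{v}$ only when it executes Step~1 and $\mathcal{A}(\mathbf{v})$ assigns a $1$ to at least one $l$. (In Steps~2 and 3 it returns an allocation of welfare at least $h$, while $\mathcal{A}(\mathbf{v})$ there has welfare at most $nl<h$ since $h/l>n$; in Step~4 it returns $\mathcal{A}(\mathbf{v})$ itself.) So it suffices to recover full welfare on a $1/n$ fraction of these ``lossy'' inputs, which I would do by singling out a set $\mathcal{R}$ of \emph{reserved} inputs, $|\mathcal{R}| = 2^n/n$, on which the transformation returns $\mathcal{A}(\mathbf{v})$ without zeroing out the $l$'s.

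The delicate point is that keeping a $1$ on an $l$-coordinate $i$ at a reserved input $\mathbf{v}$ forces, by monotonicity, a $1$ on coordinate $i$ at every neighbour $\mathbf{w}$ obtained from $\mathbf{v}$ by raising coordinate $i$ from $l$ to $h$. Two ingredients are then needed. (i) $\mathcal{R}$ must be chosen so that no two reserved inputs are comparable in the coordinatewise order — otherwise an adversarial $\mathcal{A}$ that disagrees on two comparable reserved inputs already breaks monotonicity — and, more strongly, so that only a controlled number of reserved inputs lie immediately below any given input; a residue class of a suitable linear hash of the set of $h$-coordinates modulo $n$ is the kind of object I would use. (ii) On the non-reserved inputs I would run the four-step rule of Theorem~\ref{thm:algotwo}, but modified so that, after zeroing out the $l$'s, it additionally keeps a $1$ on every $h$-coordinate $i$ that is \emph{demanded} by a reserved neighbour below it (a reserved $\mathbf{v}$ with $v_i = l$ and $x_i(\mathcal{A}(\mathbf{v}))=1$); such a $1$ can always be exhibited because $\mathcal{A}(\mathbf{v})$ is itself a feasible allocation containing coordinate $i$, and by downward-closedness it may be restricted to the $h$-coordinates of the current input. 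Checking for demands costs only $O(n)$ extra queries per input, so the running time stays polynomial. When several demands at one input cannot all be honoured while also retaining a good fraction of that input's own welfare, the transformation falls back to a crude allocation of welfare $\ge h$, which by $h/l>n$ still keeps more than half of $\mathcal{A}$'s welfare there, exactly as in Theorem~\ref{thm:algotwo}.

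The main obstacle is the monotonicity proof for this modified rule, and inside it the claim that the demands placed on a single non-reserved input by its reserved lower neighbours can be satisfied simultaneously by one feasible allocation; this is precisely what the spreading property built into $\mathcal{R}$ is for, and it is why the slightly stronger hypothesis $h/l>n$ (rather than $h/l\in\Omega(n)$) is invoked — it provides the slack for the fallback without breaking the constant-fraction guarantee. Once monotonicity is established, the constant-fraction welfare bound follows by the same accounting as in Theorem~\ref{thm:algotwo}, and full welfare is preserved on all of $\mathcal{R}$ together with the inputs already handled losslessly by the old Steps~2--4 — a set of size at least $2^n/n$, as required.
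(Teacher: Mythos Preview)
Your approach is genuinely different from the paper's, but it has a real gap in the monotonicity argument. You treat the monotonicity interaction between a reserved input and its non-reserved neighbours only in one direction: a reserved $\mathbf{v}$ with $v_i=l$ and $\mathcal{A}(\mathbf{v})_i=1$ ``demands'' a $1$ at the neighbour above. You never handle the opposite direction: a reserved $\mathbf{v}$ with $v_j=h$ and $\mathcal{A}(\mathbf{v})_j=0$ forces the non-reserved neighbour $\mathbf{u}$ \emph{below} (with $u_j=l$) to output $0$ at $j$. Concretely, take $\mathcal{A}(\mathbf{v})\equiv 0$ on a reserved $\mathbf{v}$, and let $\mathcal{A}(\mathbf{u})$ be the singleton allocation with a $1$ exactly at coordinate $j$; arrange all other neighbours of $\mathbf{u}$ within distance~2 to have $\mathcal{A}$ output the all-zero allocation. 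Then your modified four-step rule falls through to Step~4 at $\mathbf{u}$ and returns $\mathcal{A}(\mathbf{u})$ with a $1$ on the $l$-coordinate $j$, while $\mathcal{T}(\mathbf{v})=\mathcal{A}(\mathbf{v})$ has a $0$ on the $h$-coordinate $j$: monotonicity fails. Your modification only \emph{adds} $1$'s to honour demands from below; nothing removes $1$'s to respect zeros from above. A second, related issue: even for a single demand at position $i$, the allocation you exhibit (the restriction of $\mathcal{A}(\mathbf{v})$ to the $h$-coordinates of $\mathbf{w}$) may carry only one $1$ while $\mathcal{A}(\mathbf{w})$ had many $1$'s on $h$'s, so the constant-fraction bound at $\mathbf{w}$ is not preserved, and your ``fallback to welfare $\ge h$'' does not fix this when $\mathcal{A}(\mathbf{w})$ itself has welfare $\Theta(nh)$.

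For comparison, the paper does not designate a preserved set $\mathcal{R}$ in advance at all. Instead it refines the four-step rule so that each input first tries to upgrade to a strictly \emph{higher} $h$-allocation using neighbours, and then zeroes out an $l$ \emph{only when} that particular $1$ creates a monotonicity conflict with a neighbour's provisional allocation. The $1/n$ bound is obtained a posteriori by a charging argument: every input that loses welfare (zeroed an $l$) points to an adjacent input that strictly gained (upgraded its $h$-count), and each gainer is pointed to at most $n-1$ times, whence at most a $(n-1)/n$ fraction of inputs got worse. This sidesteps both of your difficulties, because zeros are introduced reactively rather than $1$'s being frozen proactively on a fixed set.
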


\begin{proof}
We exhibit such a transformation $\mathcal{T}$, which is a slight modification of the transformation in Theorem \ref{thm:algotwo}. 

We call an allocation (implicitly along with an input) an \textit{$h$-allocation} if it has a 1 on $h$ at the input, and an \textit{$l$-allocation} otherwise. For any allocation (again implicitly along with an input), call another allocation a \textit{higher $h$-allocation} if it yields strictly more 1's on $h$ than the original allocation at the input.

Given any input $\textbf{v}$, the transformation $\mathcal{T}$ proceeds as follows:

\begin{enumerate}
\item If $\mathcal{A}(\textbf{v})$ is an $h$-allocation, consider its adjacent inputs. If the allocation at one of these inputs would yield a higher $h$-allocation at $\textbf{v}$, take that allocation. (Pick arbitrarily if there are many such allocations.)

\item Simulate Step 1 for all inputs of Hamming distance 1 and 2 away from $\textbf{v}$.

\item If the allocation at $\textbf{v}$ is an $l$-allocation, consider its adjacent inputs. If the allocation at one of these inputs would yield an $h$-allocation at $\textbf{v}$, take that allocation. (Pick arbitrarily if there are many such allocations.)

\item If the allocation at $\textbf{v}$ is still an $l$-allocation, consider the inputs of Hamming distance 2 away from $\textbf{v}$. If the allocation at one of these inputs would yield an $h$-allocation for $\textbf{v}$, take that allocation. (Pick arbitrarily if there are many such allocations.)

\item If the allocation at $\textbf{v}$ has improved to a higher $h$-allocation than the original allocation, zero out all the $l$'s.

\item Simulate Steps 1 through 5 for all inputs adjacent to $\textbf{v}$. Call the allocations at this point \textit{provisional allocations}. 

\item For any 1 on $l$, zero it out only if it yields a monotonicity conflict with the provisional allocation at a neighboring input.
\end{enumerate}

The transformation takes polynomial time. One can check in a similar way as in Theorem \ref{thm:algotwo} that the resulting allocation rule is monotonic, and that a constant fraction of the welfare is preserved pointwise. We now show that a $1/n$ fraction of the inputs obtain weakly better welfare. In particular, for each input that obtains strictly less welfare, we will find a neighbor that obtains weakly better (in fact, strictly better) welfare.

An input obtains strictly less welfare only if it has to zero out an $l$ in Step 7. That means that the input has a 1 on $l$. In particular, its allocation has never been changed in Steps 1 through 6. On the other hand, a neighbor has a provisional allocation with a 0 on $h$ in that position. Assume, for contradiction, that the neighbor obtains less (or equal) welfare than before. That means that it has never changed to a better allocation during the execution of the transformation. But then one of the two inputs could have gotten strictly more $h$'s by taking the allocation of the other, a contradiction.

Hence, every time an input loses a 1 on $l$, it can point to a neighbor that got better. Each input that got better can be pointed to at most $n-1$ times. Let $W$ be the set of inputs that got worse. We have $|W|\leq (n-1)\cdot(2^n-|W|)$, and therefore $|W|\leq\frac{n-1}{n}\cdot 2^n$, as desired. \qed
\end{proof}

If $h/l>2n$, the transformation in Theorem \ref{thm:algotwobetter} also preserves the expected welfare over the uniform distribution over the $2^n$ inputs, as we show next.

\begin{theorem}
\label{thm:algotwoexpected}
Let $l<h$ be such that $h/l>2n$. There exists a polynomial-time, monotone, constant-fraction welfare-preserving transformation that preserves full welfare at a $1/n$ fraction of the inputs and preserves expected welfare over the uniform distribution over the $2^n$ inputs.
\end{theorem}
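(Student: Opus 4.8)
The plan is to reuse \emph{the same} transformation $\mathcal{T}$ that is constructed in the proof of Theorem~\ref{thm:algotwobetter}. Since $h/l>2n$ in particular implies $h/l>n$, that theorem already establishes that this $\mathcal{T}$ is polynomial-time, monotone, constant-fraction welfare-preserving, and preserves the full welfare at a $1/n$ fraction of the inputs; none of that needs to be redone. So the only new thing to prove is the statement about expected welfare, and for the uniform distribution this amounts to showing that, for every algorithm $\mathcal{A}$,
\[
\sum_{\textbf{v}\in\{l,h\}^n}\mathcal{T}(\mathcal{A})(\textbf{v})\ \geq\ \sum_{\textbf{v}\in\{l,h\}^n}\mathcal{A}(\textbf{v}),
\]
where $\mathcal{A}(\textbf{v})$ and $\mathcal{T}(\mathcal{A})(\textbf{v})$ denote welfares; dividing by $2^n$ then gives exactly the claim.

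To prove this inequality I would sharpen the charging argument already present in the proof of Theorem~\ref{thm:algotwobetter}. Let $G$ be the set of inputs whose welfare strictly increases under $\mathcal{T}$ and $W$ the set whose welfare strictly decreases; inputs outside $G\cup W$ contribute $0$ to the difference above, so it suffices to compare the total gain on $G$ with the total loss on $W$. As observed there, an input lands in $W$ only because it zeros out one or more $1$'s on $l$ in Step~7, and each such zeroing at a position $j$ is forced by a monotonicity conflict with the provisional allocation of the neighbor $\textbf{v}^{(j)}$ differing at $j$; I charge the lost value $l$ to that neighbor, which --- by the argument in Theorem~\ref{thm:algotwobetter} --- strictly improved, and in fact improved to an allocation with at least one more $1$ on $h$. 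Two accounting facts then close the gap: (i) each input in $G$ receives at most $n$ charges, one per neighbor, so the total loss is at most $nl\cdot|G|$; and (ii) each $\textbf{w}\in G$ gains at least $h-nl$, because an extra $1$ on $h$ contributes $+h$ while zeroing $\textbf{w}$'s own $1$'s on $l$ in Step~5 costs at most $nl$ (there are at most $n$ positions), and Step~7 then leaves $\textbf{w}$ untouched since it carries no $1$'s on $l$. Combining, the left side minus the right side above is at least $(h-nl)|G|-nl\cdot|G|=(h-2nl)|G|\ge 0$ using $h>2nl$; and if $W\neq\emptyset$ then $G\neq\emptyset$, so the inequality is in fact strict, while if $W=\emptyset$ the difference is trivially nonnegative.

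The genuinely routine parts --- polynomial running time, monotonicity, and the other guarantees --- are inherited verbatim from Theorem~\ref{thm:algotwobetter}. The hard part will be item (ii): one must re-examine the case analysis in the proof of Theorem~\ref{thm:algotwobetter} to confirm that a \emph{strict} welfare increase at the charged neighbor is always witnessed by a \emph{strict} increase in its number of $1$'s on $h$ (so its gain is at least $h$, not merely positive), and that after Step~5 this neighbor has no remaining $1$'s on $l$, so Step~7 does not erode the $h-nl$ bound. The threshold $2n$ is exactly what this argument needs: one factor of $n$ comes from the in-degree bound in (i) and the other from the at-most-$n$ $l$'s possibly lost in (ii), which is precisely why the hypothesis is strengthened from $h/l>n$ to $h/l>2n$.
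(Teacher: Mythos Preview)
Your proposal is correct and follows essentially the same approach as the paper: reuse the transformation from Theorem~\ref{thm:algotwobetter}, charge each zeroed $l$ to the improving neighbor responsible for the monotonicity conflict, bound the gain at each improving input by $h-nl$, bound the number of incoming charges by (roughly) $n$, and conclude via $h-nl>nl$. The paper's own proof is a three-sentence sketch of exactly this accounting (using $n-1$ rather than $n$ for the in-degree, which only makes the inequality easier); your more careful verification of item~(ii)---that the improving neighbor always gains at least one additional $1$ on $h$ and carries no residual $1$'s on $l$ into Step~7---is precisely the detail the paper leaves implicit.
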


\begin{proof}
Consider the transformation in Theorem \ref{thm:algotwobetter}. Every time an input loses a 1 on $l$, it can point to a neighbor that got better. The welfare of that neighbor has increased by at least $h-nl>nl$. Since each input that got better can be pointed to at most $n-1$ times, the expected welfare over the uniform distribution over the $2^n$ inputs is preserved. \qed
\end{proof}

Finally, we consider the other extreme case where $h/l$ is constant. In this case, simply returning a constant allocation already preserves a constant fraction of the approximation ratio. We focus on the allocation $\mathcal{A}(ll\dots l)$, but a similar statement can be obtained for any other constant allocation. The result can also be extended to the case where we have multiple input values, all of which are within a constant factor of each other.

\begin{theorem}
\label{thm:constalloc}
Let $l<h$ be arbitrary values (possibly depending on $n$), and let $\mathcal{T}$ be a transformation that returns the constant allocation $\mathcal{A}(ll\dots l)$ at any input. Then $\mathcal{T}$ preserves an $l/h$ fraction of the approximation ratio.
\end{theorem}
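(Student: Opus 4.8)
The plan is to push the approximation guarantee of $\mathcal{A}$ from the single input $ll\dots l$ to every input, exploiting that the allocation $\textbf{x}^*:=\mathcal{A}(ll\dots l)$ returned by $\mathcal{T}(\mathcal{A})$ is the same at every input and that every valuation lies between $l$ and $h$. Monotonicity of $\mathcal{T}(\mathcal{A})$ is immediate, since the allocation given to each agent does not depend on the declared valuations at all.

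First I would record what $c:=approx_\mathcal{F}(\mathcal{A})$ gives at the all-$l$ input: the welfare of $\textbf{x}^*$ there is $l\sum_i x_i^*\ge c\cdot OPT_\mathcal{F}(ll\dots l)$. Next, for an arbitrary input $\textbf{v}$ with $v_i\in\{l,h\}$, the welfare of $\textbf{x}^*$ at $\textbf{v}$ only increases: since $x_i^*\ge 0$ and $v_i\ge l$, we have $\textbf{v}\cdot\textbf{x}^*=\sum_i v_i x_i^*\ge l\sum_i x_i^*\ge c\cdot OPT_\mathcal{F}(ll\dots l)$. Then I would compare $OPT_\mathcal{F}(ll\dots l)$ with $OPT_\mathcal{F}(\textbf{v})$: if $\textbf{y}$ is an optimal allocation at $\textbf{v}$, then $\textbf{y}\in\mathcal{F}$ is feasible for the all-$l$ instance too, so $OPT_\mathcal{F}(ll\dots l)\ge l\sum_i y_i$, whereas $OPT_\mathcal{F}(\textbf{v})=\sum_i v_i y_i\le h\sum_i y_i$; dividing gives $OPT_\mathcal{F}(ll\dots l)\ge \tfrac{l}{h}OPT_\mathcal{F}(\textbf{v})$. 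Chaining the three bounds yields $\textbf{v}\cdot\textbf{x}^*\ge \tfrac{l}{h}\,c\cdot OPT_\mathcal{F}(\textbf{v})$ for every $\textbf{v}$, and taking the minimum over $\textbf{v}$ gives $approx_\mathcal{F}(\mathcal{T}(\mathcal{A}))\ge \tfrac{l}{h}\,approx_\mathcal{F}(\mathcal{A})$, which is the claim.

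I do not expect a real obstacle: the argument uses only that every valuation lies in $[l,h]$ and that $\mathcal{F}$ is shared across instances, so downward-closedness is not even needed. The only fine points are the harmless standing assumptions that allocations have nonnegative entries and that $l>0$, together with the degenerate case $OPT_\mathcal{F}(\textbf{v})=0$, which can be set aside. The promised extension to a constant-size set of values lying within a factor $\rho$ of one another goes through verbatim, with $h/l$ replaced by $\rho$.
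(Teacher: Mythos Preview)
Your argument is correct and follows essentially the same route as the paper: both lower-bound the welfare of the fixed allocation $\mathcal{A}(ll\dots l)$ at an arbitrary $\textbf{v}$ by its welfare at $ll\dots l$, upper-bound $OPT_\mathcal{F}(\textbf{v})$ by $\tfrac{h}{l}\,OPT_\mathcal{F}(ll\dots l)$, and chain these with the definition of $approx_\mathcal{F}(\mathcal{A})$. Your write-up is simply more explicit about the intermediate steps (and adds the monotonicity remark and the degenerate-case caveat), but the underlying idea is the same.
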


\begin{proof}
One can check that $\mathcal{T}(\mathcal{A})(\textbf{v})\geq \mathcal{A}(ll\dots l)$ for any input $\textbf{v}$. Moreover, we have that $OPT(\textbf{v})\leq \frac{h}{l}\cdot OPT(ll\dots l)$, since any allocation at $\textbf{v}$ would return at least an $l/h$ fraction of the welfare when allocated to the input $ll\dots l$. Hence \begin{align*}
approx_\mathcal{F}(\mathcal{T}(\mathcal{A}))&=\min_\textbf{v}\frac{\mathcal{T}(\mathcal{A})(\textbf{v})}{OPT_\mathcal{F}(\textbf{v})}\\
&\geq \frac{l\cdot \mathcal{A}(ll\dots l)}{h\cdot OPT_\mathcal{F}(ll\dots l)}\\
&\geq \frac{l}{h}\cdot approx_\mathcal{F}(\mathcal{A}),
\end{align*}
 as desired. \qed
\end{proof}

Combining this theorem with Theorem \ref{thm:algotwo}, we have that a constant fraction of the approximation ratio can be preserved if either $h/l$ is constant or $h/l\in\Omega(n)$. This means that if we were to obtain a negative result with two values, it would have to be the case that $h/l$ lies strictly between constant and linear.

\subsection{Multiple values}

In this subsection, we show that we can generalize the transformation in Theorem \ref{thm:algotwo} to the case where we have multiple input values, each pair separated by a ratio of $\Omega(n)$. Recall that when some two input values are separated by $O(n^\alpha)$ for some $\alpha\in [0,1)$, we have from Theorem \ref{thm:nonconstratioconstpointwise} that it is impossible to preserve a constant fraction of the welfare pointwise. Hence we have a complete picture of constant-fraction welfare-preserving transformations for multiple input values as well.

\begin{theorem}
\label{thm:algomorebetter}
Let $k$ be a constant, and let $a_1,\dots,a_k$ be such that $a_{i+1}/a_i\in\Omega(n)$ for $i=1,\ldots,k-1$. There exists a polynomial-time, monotone, constant-fraction welfare-preserving transformation. 

Moreover, if $a_{i+1}/a_i>n$ for all $i$, then the transformation can be modified so that it also preserves full welfare at a $1/(k-1)n$ fraction of the inputs.
\end{theorem}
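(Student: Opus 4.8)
The plan is to generalize the transformation of Theorem~\ref{thm:algotwo} by treating each value $a_j$ as ``high'' relative to the smaller values $a_1,\dots,a_{j-1}$. The quantitative engine is that the hypothesis $a_{i+1}/a_i\geq cn$ for a constant $c>0$ compounds, so $a_1+\dots+a_{j-1}\leq 2a_j/(cn)$ and hence any allocation collects at most $O(a_j/c)$ of its welfare from positions of value strictly below $a_j$, which is dominated by a single $1$ placed on a position of value $a_j$. Thus, whenever a feasible allocation puts a $1$ on some position of value $a_j$ and on no position of larger value, zeroing out all its $1$'s on positions of smaller value --- still feasible by downward-closedness --- retains a constant fraction of its welfare; equivalently, up to a constant factor the welfare of any allocation is (number of $1$'s on its highest occupied value) times that value.

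With this in hand I would define $\mathcal{T}$ as follows. Fix a constant radius $r=r(k)$. On input $\textbf{v}$, among $\mathcal{A}(\textbf{v})$ and the allocations $\mathcal{A}(\textbf{u})$ over all inputs $\textbf{u}$ within Hamming distance $r$ of $\textbf{v}$, let $a_{j^*}$ be the largest value such that one of these allocations puts a $1$ on a position $i$ with $\textbf{v}_i=a_{j^*}$; pick such an allocation maximizing the number of such $1$'s, zero out every position of $\textbf{v}$ of value below $a_{j^*}$, and return the result, which by maximality of $j^*$ has $1$'s only on positions of value $a_{j^*}$. Since $k$ is constant, the radius-$r$ ball around $\textbf{v}$ has only $\mathrm{poly}(n)$ inputs, so $\mathcal{T}$ runs in polynomial time. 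For welfare, write $a_{j_0}$ for the highest value that $\mathcal{A}(\textbf{v})$ allocates; then $j_0\leq j^*$, and if $j_0<j^*$ a single $1$ on $a_{j^*}$ is already within a constant factor of the welfare of $\mathcal{A}(\textbf{v})$ since $a_{j_0}\leq a_{j^*}/(cn)$, while if $j_0=j^*$ the chosen allocation has at least as many $1$'s on $a_{j_0}$ as $\mathcal{A}(\textbf{v})$ does, so by the observation above the output captures a constant fraction of the welfare of $\mathcal{A}(\textbf{v})$.

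Monotonicity is the main obstacle, and the plan is to replay the case analysis of Theorem~\ref{thm:algotwo} level by level. Suppose toward a contradiction that neighboring inputs $\textbf{v},\textbf{w}$ differ only at position $i$ with $\textbf{v}_i=a_s>a_t=\textbf{w}_i$, and that the output of $\mathcal{T}$ has a $0$ at position $i$ on $\textbf{v}$ but a $1$ there on $\textbf{w}$. Since the output on $\textbf{w}$ has a $1$ on a position of value $a_t$, the level it reaches is exactly $t$, and because $a_s>a_t$ the allocation witnessing level $t$ for $\textbf{w}$, read at $\textbf{v}$, already places a $1$ on a position of value $a_s$. Arguing in the spirit of Theorem~\ref{thm:algotwo}, this forces $\textbf{v}$ to reach a level at least $s$, so position $i$ (of value $a_s$) is zeroed at $\textbf{v}$ only if $\textbf{v}$ in fact reaches a level strictly above $s$; reading $\textbf{v}$'s witnessing allocation at $\textbf{w}$ then forces $\textbf{w}$ to reach that strictly higher level too, contradicting the assumed outputs. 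The genuinely new difficulty relative to the two-value case is that an input's allocation may now be \emph{upgraded} through several levels rather than merely having its low $1$'s deleted, so the witnessing allocations can cascade from one neighbor to the next; I would handle this by taking $r$ to be a suitably large constant in $k$ and, as in Theorem~\ref{thm:algotwobetter}, having each input additionally simulate $\mathcal{T}$ on its near neighbors --- a recursion of depth $O(k)$, hence still polynomial. Verifying that a constant radius and simulation depth suffice, and that the case analysis closes for every configuration of the levels reached at $\textbf{v}$ and $\textbf{w}$, is the bulk of the work.

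For the ``moreover'' claim, assume $a_{i+1}/a_i>n$. I would modify $\mathcal{T}$ exactly as Theorem~\ref{thm:algotwobetter} modifies Theorem~\ref{thm:algotwo}: before committing to any zeroing, each input forms provisional allocations for itself and its neighbors, and then deletes a $1$ on a below-top value only when doing so is needed to avoid a monotonicity conflict with a provisional neighbor. An input loses welfare only if it is forced to delete such a $1$; in that case the offending neighbor has a strictly larger value at that position and, by the argument of Theorem~\ref{thm:algotwobetter}, strictly gained welfare --- indeed, since $a_{i+1}/a_i>n$, the high $1$ it gains outweighs the low $1$ that was lost. Each input has $(k-1)n$ neighbors, so by the charging argument of Theorem~\ref{thm:algotwobetter} every gainer is charged at most $(k-1)n-1$ times, whence the set $W$ of losing inputs satisfies $|W|\leq\big((k-1)n-1\big)\big(2^n-|W|\big)$ and therefore $|W|\leq\frac{(k-1)n-1}{(k-1)n}\cdot 2^n$, leaving full welfare preserved at a $1/((k-1)n)$ fraction of the inputs.
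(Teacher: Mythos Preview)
Your welfare analysis is correct and matches the paper's quantitative intuition. The difference, and the gap, is in the transformation itself and the monotonicity argument.

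You propose a single rule: over a ball of radius $r$, take the allocation whose top occupied value at $\textbf{v}$ is highest, breaking ties by the number of $1$'s on that top value. The paper instead uses a \emph{staged} procedure of $O(k^2)$ steps, where at the $j$th step one looks only at inputs at Hamming distance exactly $j$ and performs only a specific transition type (e.g.\ for $k=3$: at distance~$1$ allow $l\to m/h$ or $m\to h$; at distance~$2$ only $l\to m$; at distance~$3$ any non-$h\to h$; at distance~$4$ only $m\to h$; at distance~$5$ only $l\to h$). The point is that the \emph{ordering} of which upgrade is permitted at which distance is what makes the two-neighbor case analysis close.

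Your argument runs into exactly the problem you flag: the witness for $\textbf{w}$ sits at distance $\le r$ from $\textbf{w}$ but only $\le r+1$ from $\textbf{v}$, so $\textbf{v}$'s ball need not see it. Enlarging $r$ uniformly just pushes the same obstruction one step further out; layering on ``simulate neighbors'' does not obviously terminate the cascade either, because in the multi-value setting an input can be upgraded through several levels and each upgrade can be witnessed farther away. The paper's fix is not a bigger uniform radius but a \emph{graded} one: more aggressive upgrades (jumping more levels) are only permitted from farther away, and in a carefully chosen order. That asymmetry is precisely what lets one show, for neighboring $\textbf{v},\textbf{w}$, that whatever witness $\textbf{w}$ used is already visible to $\textbf{v}$ at an earlier step, closing the chain in $O(k^2)$ steps. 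Your proposal does not supply this mechanism, and ``verifying that a constant radius and simulation depth suffice'' is not a detail to be filled in but the heart of the proof.

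A minor slip in the ``moreover'' part: the total number of inputs is $k^n$, not $2^n$, so the charging inequality should read $|W|\le\big((k-1)n-1\big)\big(k^n-|W|\big)$; the conclusion is unaffected.
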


\begin{proof}
We first consider the case where there are three input values $h,m,l$, and focus only on preserving a constant fraction of the welfare pointwise. It is possible to extend to any constant number of inputs $k$ and also preserve full welfare for a $1/(k-1)n$ fraction of the inputs, and we explain that later.

For any allocation (implicitly along with an input), we call it an \textit{$h$-allocation} if it has a 1 on $h$ at the input. Otherwise, we call it an \textit{$m$-allocation} if it has a 1 on $m$ at the input. Finally, we call it an \textit{$l$-allocation} if it is neither an $h$-allocation nor an $m$-allocation. For any allocation (again implicitly along with an input), call another allocation a \textit{higher allocation} if it yields either strictly more 1's on $h$ than the original allocation at the input, or an equal number of 1's on $h$ and strictly more 1's on $m$.

We exhibit a transformation $\mathcal{T}$ that preserves a constant fraction of the welfare pointwise. Given any input $\textbf{v}$, the transformation $\mathcal{T}$ proceeds as follows:

\begin{enumerate}
\item If $\mathcal{A}(\textbf{v})$ is an $l$-allocation, and some input adjacent to $\textbf{v}$ has an allocation that would yield an $m$-allocation or an $h$-allocation at $\textbf{v}$, or if $\mathcal{A}(\textbf{v})$ is currently an $m$-allocation, and some input adjacent to $\textbf{v}$ has an allocation that would yield an $h$-allocation at $\textbf{v}$, take that allocation for the time being. (Pick arbitrarily if there are many such allocations.)
\item If the allocation at $\textbf{v}$ is currently an $l$-allocation, and some input at Hamming distance 2 away from $\textbf{v}$ has an allocation that would yield an $m$-allocation at $\textbf{v}$, take that allocation for the time being.  (Pick arbitrarily if there are many such allocations.)
\item If the allocation at $\textbf{v}$ is currently \emph{not} an $h$-allocation, and some input at Hamming distance 3 away from $\textbf{v}$ has an allocation that would yield an $h$-allocation at $\textbf{v}$, take that allocation for the time being. (Pick arbitrarily if there are many such allocations.)
\item If the allocation at $\textbf{v}$ is currently an $m$-allocation, and some input at Hamming distance 4 away from $\textbf{v}$ has an allocation that would yield an $h$-allocation at $\textbf{v}$, take that allocation for the time being. (Pick arbitrarily if there are many such allocations.) 
\item If the allocation at $\textbf{v}$ is currently an $l$-allocation, and some input at Hamming distance 5 away from $\textbf{v}$ has an allocation that would yield an $h$-allocation at $\textbf{v}$, take that allocation for the time being. (Pick arbitrarily if there are many such allocations.)
\item If the allocation at $\textbf{v}$ is currently an $h$-allocation, zero out all the $m$'s and $l$'s. If it is an $m$-allocation, zero out all the $l$'s. Return the current allocation $\mathcal{A}(\textbf{v})$.
\end{enumerate}
The transformation runs in polynomial time. One can check in a similar way as in Theorem \ref{thm:algotwo} that the resulting allocation rule is monotone. Moreover, since $h/m,m/l\in\Omega(n)$, a constant fraction of the welfare is preserved pointwise.

As mentioned, it is possible to extend the transformation to any constant number of inputs $k$ and also preserve full welfare for a $1/(k-1)n$ fraction of the inputs. Suppose that the input values are $a_1<a_2<\dots<a_k$. Then the transformation takes $O(k^2)$ steps.
\begin{itemize}
\item $?\rightarrow ?$
\item $a_1\rightarrow a_2$
\item $?\rightarrow a_3$
\item $a_2\rightarrow a_3$
\item $a_1\rightarrow a_3$
\item $?\rightarrow a_4$
\item $a_1\rightarrow a_4$
\item $a_2\rightarrow a_4$
\item $a_3\rightarrow a_4$
\item $\dots$
\item $a_{n-1}\rightarrow a_n$
\end{itemize}

In each step, the transformation considers allocations at inputs at Hamming distance one higher than the previous step. If the change in the type of allocation (e.g., from an $a_2$-allocation to an $a_5$-allocation) matches the specified change in that step, the transformation executes the change. The question mark (e.g., $?\rightarrow a_3$) denotes \emph{any} allocation. Finally, the transformation zeroes out all the input values other than the highest one of the allocation. One can check that this transformation preserves a constant fraction of the welfare pointwise. We can extend it in a similar way as in Theorem \ref{thm:algotwobetter} so that the transformation also preserves full welfare at a $1/(k-1)n$ fraction of the inputs. \qed
\end{proof}

\subsubsection*{Acknowledgments.} The author thanks Tim Roughgarden for helpful discussion and acknowledges support from a Stanford Graduate Fellowship.

\newpage




\begin{thebibliography}{99}

\bibitem{BabaioffLaPa09} Babaioff, M., Lavi, R., Pavlov, E., 2009. Single-value combinatorial auctions and algorithmic implementation in undominated strategies. Journal of the ACM 56(1):4.

\bibitem{BeiHu11} Bei, X., Huang, Z., 2011. Bayesian incentive compatibility via fractional assignments. In Proceedings of the Twenty-Second Annual ACM-SIAM Symposium on Discrete Algorithms, 720--733.


\bibitem{BriestKrVo11} Briest, P., Krysta, P., V\"{o}cking, B., 2011. Approximation techniques for utilitarian mechanism design. SIAM Journal on Computing 40(6), 1587--1622.

\bibitem{CaiDaWe13} Cai, Y., Daskalakis, C., Weinberg, S. M., 2013. Understanding incentives: Mechanism design becomes algorithm design. In Proceedings of the 54th Symposium on Foundations of Computer Science, 618--627.

\bibitem{ChawlaImLu12} Chawla, S., Immorlica, N., Lucier, B., 2012. On the limits of black-box reductions in mechanism design. In Proceedings of the 44th Symposium on Theory of Computing, 435--448.

\bibitem{DughmiRo14} Dughmi, S., Roughgarden, T., 2014. Black-box randomized reductions in algorithmic mechanism design. SIAM Journal on Computing 43(1), 312--326.

\bibitem{GoelKaWa10} Goel, G., Karande, C., Wang, L., 2010. Single-parameter combinatorial auctions with partially public valuations. In Proceedings of the 3rd International Symposium on Algorithmic Game Theory, 234--245.

\bibitem{HartlineKlMa15} Hartline, J. D., Kleinberg, R., Malekian, A., 2015. Bayesian incentive compatibility via matchings. Games and Economic Behavior 92, 401--429.

\bibitem{HartlineLu10} Hartline, J. D., Lucier, B., 2010. Bayesian algorithmic mechanism design. In Proceedings of the 42nd ACM Symposium on Theory of Computing, 301--310.

\bibitem{HuangWaZh11} Huang, Z., Wang, L., Zhou, Y., 2011. Black-box reductions in mechanism design. In Proceedings of the 14th International Workshop on Approximation Algorithms for Combinatorial Optimization, 254--265.

\bibitem{LaviSw11} Lavi, R., Swamy, C., 2011. Truthful and near-optimal mechanism design via linear programming. Journal of the ACM 58(6), 25.

\bibitem{Myerson81} Myerson, R. B., 1981. Optimal auction design. Mathematics of Operations Research 6(1), 58--73.

\bibitem{NisanRo01} Nisan, N., Ronen, A., 2001. Algorithmic Mechanism Design. Games and Economic Behavior 35(1--2), 166--196.

\bibitem{PassSe14} Pass, R., Seth, K., 2014. On the impossibility of black-box transformations in mechanism design. In Proceedings of the 7th International Symposium on Algorithmic Game Theory, 279--290.

\end{thebibliography}
\end{document}